\newtheorem{theorem}{Theorem}
\newtheorem{lemma}[theorem]{Lemma}
\newtheorem{proposition}[theorem]{Proposition}
\newtheorem{corollary}[theorem]{Corollary}
\theoremstyle{definition}
\newtheorem{definition}[theorem]{Definition}
\newtheorem*{defNotag}{Definition}
\newtheorem*{remark}{Remark}
\newtheorem*{comt}{Comment}
\newtheorem*{ex}{Example}
\newtheorem{exam}{Example}
\newtheorem*{notation}{Notation}
\newcommand{\field}[1]{\mathbb{#1}}
\newcommand{\F}{\field{F}}
\newcommand{\N}{\field{N}}
\newcommand{\Z}{\field{Z}}
\title[Special bent and near-bent functions]
      {Special bent and near-bent functions}
\author[J. Wolfmann]{}
\subjclass{xxXxx}
\keywords{Bent functions, near-bent functions}
\begin{document}
\maketitle

\centerline{\scshape J. Wolfmann}
\medskip
{\footnotesize
  \centerline{IMATH(IAA), Universit\'e du Sud Toulon-Var}
   \centerline{83957 La Garde Cedex, France}
}

\bigskip

\centerline{(Communicated by Simon Litsyn)}

\begin{abstract}
Starting from special near-bent functions in dimension $2t-1$ we construct bent functions in dimension  $2t$
having  a specific derivative. We deduce new families of bent functions.
\end{abstract}

\section{Introduction}\label{sec1}

Let $\F_{2}$ be the finite field of order 2. An $m$-boolean function (or boolean function in $m$ dimensions) is a map $F$ from  $\F_{2}^{m}$ to $\F_{2}$.

Bent functions are the boolean functions whose Fourier coefficients have
constant magnitude. They were introduced by Rothaus in \cite{r8}. An $m$-boolean function $F$ is bent if  all its Fourier coefficients are in
$\lbrace -2^{m/2},2^{m/2}\rbrace$.
Bent functions are  of interest for Coding Theory, Cryptology and well-correlated binary
sequences. For example, they have the maximum possible Hamming distance to the set of affine boolean functions.
They were the topic of a lot of works (see \cite{r2,r3,r6,r7,r9,r10}) but the complete classification of bent functions
and other questions are still open.
\par
By definition, a $m$-boolean function $F$ is near-bent if all its Fourier coefficients are in
$\lbrace -2^{(m+1)/2},0,2^{(m+1)/2}\rbrace$. Since the Fourier coefficients are in $\Z$ the bent functions
in $m$ dimensions exist only when $m$ is even and near-bent functions in $m$ dimensions exist only when $m$ is odd.

\subsection{A two-variable representation}\label{sec1.1}

Assume $m=2t$. We identify  $\F_{2}^{2t}$ with  $\F_{2^{2t}}$ and $\F_{2^{2t}}$ with
$$
\F_{2^{2t-1}}\times \F_{2}=\lbrace X=(u,\nu)\mid u\in \F_{2^{2t-1}},\nu \in \F_{2}\rbrace.
$$
This decomposition can be used for a two-variable representation of $(2t)$-boolean functions as follows
$$
\F_{2t}^{(0)}=\lbrace (u,0)\mid u \in \F_{{ }_{2^{2t-1}}}\rbrace \quad{\rm and}\quad
\F_{2t}^{(1)}=\lbrace (u,1)\mid u \in \F_{{ }_{2^{2t-1}}}\rbrace.
$$
The restrictions of a  $(2t)$-boolean function $F$, respectively to $\F_{2t}^{(0)}$ and  to $\F_{2t}^{(1)}$ induce two
$(2t-1)$-boolean functions $f_{0}$ and $f_{1}$ defined by $f_{0}(u)=F(u,0)$ and $f_{1}(u)=F(u,1)$.

The two-variable representation (TVR) of $F$ is then defined by
\begin{equation}\label{eq1}
 \phi_{F}(x,y)=(y+1)f_{0}(x)+yf_{1}(x).
\end{equation}
Note that  $F(u,0)=f_{0}(u)=\phi_{F}(u,0)$ and $F(u,1)=f_{1}(u)=\phi_{F}(u,1)$. Hence if $X=(u,\nu)$ then $F(X)=\phi_{F}(u,\nu)$.

It can be proved that if $F$ is a $(2t)$-bent function then $f_{0}$ and $f_{1}$ are $(2t-1)$-near-bent functions (see  3.1.4).
This leads to considering the following inverse problem:
 construct $(2t)$-bent functions from $(2t-1)$-near-bent functions. This was the purpose of \cite{r7} and in a certain sense of \cite{r10} where connections
with cyclic codes were established.

Using the above representation of $\F_{2^{2t}}$, it could be easily checked that $f_{0}+f_{1}$ is the derivative of $F$ with respect
to $(0,1)$ (see next section for the definition of the derivative). A possible way to study bent functions is to classify them regarding the  degree (as a boolean function) of this derivative.
 The most simple case to consider is when this degree  is one.
In the present paper we are concerned with bent functions such that $f_{0}+f_{1}=tr+\xi$ where $tr$ is the trace of $\F_{2^{2t-1}}$ and
 $\xi\in \lbrace 0,1\rbrace$. We introduce the new notion of pseudo-duality and as application we present new families of bent functions.
 By the way we present a generalization of a result on the Gold function (Lemma \ref{lem15} in Section \ref{sec3.2}).

\section{Results}\label{sec2}
 The proofs of the Theorems of this section will be given in Section 3 while the other results are proven in the present section.

 First recall some usual definitions of a $m$-boolean function $F$.

$\bullet$ If $e\in \F_{2}^{m}$ then the Derivative of $F$ with respect to $e$ is the
$m$-boolean function $D_{e}F$ defined by
 $$D_{e}F(X)=F(X)+F(X+e).$$

\begin{remark}
In this paper we consider the following special cases.
 If $F$ is a $(2t)$-boolean function whose two-variable representation is given in \eqref{eq1} then direct calculation shows that
$$D_{(0,1)}(F)=f_{0}+f_{1}.$$
If $f$ is a $(2t-1)$-boolean function then
$$D_{1}f(x)=f(x+1)+f(x).$$
\end{remark}

$\bullet$ The Fourier transform (or the Walsh transform) $\hat{F}$ of  $F$ is the map from $\F_{2}^{m}$ into $\Z$
 defined by
$$\hat{F}(v)=\sum_{X\in\F_{2^{m}}}(-1)^{F(X)+ <v,X>},$$
where  $< ,>$ denotes any inner product of $\F_{2^{m}}$ over $\F_{2}$.
$\hat{F}(v)$ is called  the Fourier coefficient of $v$.

\begin{remark}
The set of $\hat{F}(v)$ when $v$ runs through $\F_{2^{m}}$ is independent of the choice of the inner product $<,>$.
\end{remark}

$\bullet$ $F$ is bent if  all its Fourier coefficients are in $\lbrace -2^{m/2},2^{m/2}\rbrace$.
$F$ is near-bent if all its Fourier coefficients are in $\lbrace -2^{(m+1)/2},0,2^{(m+1)/2}\rbrace$.

$\bullet$ If $m=2t$ and if $F$ is a bent function then the dual  $\tilde{F}$ of   $F$ is the $(2t)$-boolean function defined by
$$\hat{F}(v)=(-1)^{\tilde{F}(v)}2^{t},$$
where $\hat{F}$ is the Fourier transform of $F$.
It is well-known, and easy to proof  that the dual of a bent function is a bent function (see \cite{r3} or \cite{r8}).

$\bullet$ The algebraic degree, or more simply ``the degree'' of an $m$-boolean function is the degree of its $m$-variable polynomial representation.
If it is expressed as $Tr(\pi(X))$ where $Tr$ is the trace function of $\F_{2^m}$ then its degree is the maximum of the binary weight of the monomial exponents of $\pi(X)$.
Recall that the binary weight weight of an integer is the number of non-zero coefficients of the binary expansion of this integer.

See References section for details on the previous definitions and results.

\subsection{Main theorems}\label{sec2.1}
From now on, we use the definitions and notations of the introduction. Thus  $\F_{2}^{2t}$ is identified with $\F_{2^{2t}}$ and $\F_{2^{2t}}$ with
 $\F_{2^{2t-1}}\times \F_{2}$. The two-variable representation (TVR) of a $(2t)$-boolean function $F$ is defined by
$\phi_{F}(x,y)=(y+1)f_{0}(x)+yf_{1}(x)$ where $f_{0}$ and  $f_{1}$ are defined in Section 1.
Obviously, $\phi_{F_{1}+F_{2}}=\phi_{F_{1}}+\phi_{F_{2}}$.

\begin{defNotag}
$f_{0}$ and $f_{1}$ are called the components of $F$.
\end{defNotag}

\begin{notation}
The trace of  $\F_{2^{2t-1}}$ is denoted by $tr$ and defined by
$$tr(x)=\sum_{i=0}^{2t-2}x^{2^{i}}.$$
\end{notation}

The next theorem sets a condition on a near-bent function to be the first component of a bent function.
\begin{theorem}\label{thm1}
Let $f_{0}$ be a $(2t-1)$-near-bent function. If the derivative $D_{1}f_{0}$ is a constant function
then the $(2t)$-boolean function $F$ such that
$$
\phi_{F}(x,y)=(y+1)f_{0}(x)+yf_{1}(x) \quad{\rm with}\quad f_{0}+f_{1}=tr
$$
is a bent function.
\end{theorem}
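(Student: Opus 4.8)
The plan is to compute the Fourier transform of $F$ directly in the product coordinates and reduce everything to the Walsh spectrum of the single component $f_0$. Since bentness does not depend on the chosen inner product (Remark 2), I would take on $\F_{2^{2t}}=\F_{2^{2t-1}}\times\F_{2}$ the inner product $\langle (a,b),(u,\nu)\rangle = tr(au)+b\nu$, where $tr$ is the trace of $\F_{2^{2t-1}}$. Splitting the defining sum $\hat{F}(a,b)=\sum_{(u,\nu)}(-1)^{\phi_{F}(u,\nu)+tr(au)+b\nu}$ according to $\nu=0$ and $\nu=1$, and using $\phi_{F}(u,0)=f_{0}(u)$ and $\phi_{F}(u,1)=f_{1}(u)$, gives
$$\hat{F}(a,b)=\hat{f_{0}}(a)+(-1)^{b}\,\hat{f_{1}}(a),$$
where the hats on the right denote Walsh transforms over $\F_{2^{2t-1}}$.

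The next step is to eliminate $f_{1}$ using the hypothesis $f_{0}+f_{1}=tr$, i.e. $f_{1}=f_{0}+tr$. Because $tr(u)+tr(au)=tr((1+a)u)$ by linearity of the trace, I obtain $\hat{f_{1}}(a)=\hat{f_{0}}(a+1)$, hence
$$\hat{F}(a,b)=\hat{f_{0}}(a)+(-1)^{b}\,\hat{f_{0}}(a+1).$$
Since $f_{0}$ is $(2t-1)$-near-bent, every value $\hat{f_{0}}(a)$ lies in $\{0,\pm 2^{t}\}$. Therefore $F$ will be bent precisely when, for every $a$, exactly one of $\hat{f_{0}}(a)$ and $\hat{f_{0}}(a+1)$ is nonzero: in that case the right-hand side is a single term $\pm 2^{t}$, whereas if both vanish we get $0$ and if both are nonzero we get $0$ or $\pm 2^{t+1}$. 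So the whole theorem reduces to this ``alternation'' property of the spectrum of $f_{0}$.

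This alternation is exactly what the derivative hypothesis buys. Writing $D_{1}f_{0}\equiv c$ and substituting $u\mapsto u+1$ in the sum defining $\hat{f_{0}}(a)$, I get $\hat{f_{0}}(a)=(-1)^{tr(a)+c}\hat{f_{0}}(a)$, which forces $\hat{f_{0}}(a)=0$ whenever $tr(a)\neq c$. Thus the support of $\hat{f_{0}}$ is contained in the affine hyperplane $H_{c}=\{a: tr(a)=c\}$, which has $2^{2t-2}$ elements. On the other hand, Parseval together with near-bentness shows that $\hat{f_{0}}$ has exactly $2^{2t-2}$ nonzero values (each of square $2^{2t}$, summing to $2^{4t-2}$); comparing cardinalities, the support equals $H_{c}$ exactly.

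Finally I would invoke the parity fact that $2t-1$ is odd, so $tr(1)=1$ and hence $tr(a+1)=tr(a)+1$ for all $a$. Consequently, of any pair $a,a+1$ exactly one lies in $H_{c}$, i.e. exactly one of $\hat{f_{0}}(a),\hat{f_{0}}(a+1)$ is nonzero (and equal to $\pm 2^{t}$). This is the required alternation, so $|\hat{F}(a,b)|=2^{t}$ for all $(a,b)$ and $F$ is bent. The step I expect to be most delicate is not any single computation but the counting above: the derivative condition only yields the inclusion $\mathrm{supp}(\hat{f_{0}})\subseteq H_{c}$, and it is the exact Parseval count---using that $f_{0}$ is genuinely near-bent and not merely of constant derivative---that upgrades this to equality and so excludes the ``both zero'' case.
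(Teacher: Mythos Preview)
Your proof is correct and follows essentially the same route as the paper: the paper packages your computation of $\hat F(a,b)$ and the identity $\hat f_1(a)=\hat f_0(a+1)$ as Lemma~\ref{lem13}, your alternation criterion as Proposition~\ref{prop14}, and your ``support of $\hat f_0$ equals the trace hyperplane $H_c$'' step as the fundamental Lemma~\ref{lem15}. The only cosmetic difference is that for the counting step the paper quotes the known spectral distribution $(R_6)$ where you invoke Parseval directly, which amounts to the same thing.
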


We now present a pseudo-reciprocoal theorem.

\begin{theorem}\label{thm2}
Let $F$ be a bent function and $\tilde{F}$ be its dual bent function with
$$
\phi_{F}(x,y)=(y+1)f_{0}(x)+yf_{1}(x) \quad{\rm and} \phi_{\tilde{F}}(x,y)=(y+1)\tilde{f}_{0}(x)+y\tilde{f}_{1}(x).
$$
If $f_{0}+f_{1}=tr$, then $D_{1}\tilde{f}_{0}=0$ and $D_{1}\tilde{f}_{1}=1$.
\end{theorem}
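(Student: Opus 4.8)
The plan is to pass through the Fourier transform and exploit how the two-variable representation decomposes it. Fix the inner product on $\F_{2^{2t}}=\F_{2^{2t-1}}\times\F_{2}$ given by $\langle (a,b),(u,\nu)\rangle=tr(au)+b\nu$; by the Remark on independence from the inner product this choice costs nothing. Splitting the defining sum over $\nu\in\{0,1\}$ immediately yields
\[
\hat F(a,b)=\hat f_{0}(a)+(-1)^{b}\hat f_{1}(a),
\]
where $\hat f_{0},\hat f_{1}$ are the Fourier transforms of the components taken over $\F_{2^{2t-1}}$. Feeding this into the definition of the dual, $\hat F(v)=(-1)^{\tilde F(v)}2^{t}$, and reading off the components $\tilde f_{0}(a)=\tilde F(a,0)$ and $\tilde f_{1}(a)=\tilde F(a,1)$ gives the two identities
\[
(-1)^{\tilde f_{0}(a)}2^{t}=\hat f_{0}(a)+\hat f_{1}(a),\qquad
(-1)^{\tilde f_{1}(a)}2^{t}=\hat f_{0}(a)-\hat f_{1}(a).
\]
Bentness of $F$ guarantees both right-hand sides are nonzero (indeed $\pm2^{t}$), so these equations genuinely determine $\tilde f_{0}$ and $\tilde f_{1}$.

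The second ingredient is to convert the hypothesis $f_{0}+f_{1}=tr$ into a statement about Fourier transforms. Writing $f_{1}=f_{0}+tr$ and using additivity of the trace, $tr(u)+tr(au)=tr((1+a)u)$, a one-line computation shows that adding $tr$ to $f_{0}$ amounts to translating the transform by the element $1\in\F_{2^{2t-1}}$:
\[
\hat f_{1}(a)=\hat f_{0}(a+1).
\]
This is the crucial step: it replaces the two unknown transforms by a single transform evaluated at $a$ and at $a+1$, so that the very shift $a\mapsto a+1$ appearing in $D_{1}$ is already built into the dual identities.

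Substituting $\hat f_{1}(a)=\hat f_{0}(a+1)$ into those identities turns them into
\[
(-1)^{\tilde f_{0}(a)}2^{t}=\hat f_{0}(a)+\hat f_{0}(a+1),\qquad
(-1)^{\tilde f_{1}(a)}2^{t}=\hat f_{0}(a)-\hat f_{0}(a+1).
\]
Now I would simply replace $a$ by $a+1$ and use $(a+1)+1=a$. The sum $\hat f_{0}(a)+\hat f_{0}(a+1)$ is symmetric under this swap, forcing $(-1)^{\tilde f_{0}(a+1)}=(-1)^{\tilde f_{0}(a)}$, i.e. $\tilde f_{0}(a+1)=\tilde f_{0}(a)$ for every $a$, which is exactly $D_{1}\tilde f_{0}=0$. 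The difference $\hat f_{0}(a)-\hat f_{0}(a+1)$ merely changes sign, forcing $(-1)^{\tilde f_{1}(a+1)}=-(-1)^{\tilde f_{1}(a)}$, i.e. $\tilde f_{1}(a+1)=\tilde f_{1}(a)+1$, which is $D_{1}\tilde f_{1}=1$.

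The computations themselves are short; the only real obstacle is the bookkeeping, namely (i) choosing the inner product so that both the component decomposition of $\hat F$ and the shift identity $\hat f_{1}(a)=\hat f_{0}(a+1)$ come out cleanly, and (ii) correctly matching the components $\tilde f_{0},\tilde f_{1}$ of the dual to the sum and the difference of $\hat f_{0}$ and $\hat f_{1}$. Once the shift identity is in place, the symmetry of the sum and the antisymmetry of the difference under $a\mapsto a+1$ do all the work, and the two conclusions drop out simultaneously.
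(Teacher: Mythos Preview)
Your proof is correct and follows essentially the same route as the paper: both use the decomposition $\hat F(a,b)=\hat f_{0}(a)+(-1)^{b}\hat f_{1}(a)$ together with the shift identity $\hat f_{1}(a)=\hat f_{0}(a+1)$ to see that $\hat F(a,0)$ is symmetric under $a\mapsto a+1$, whence $D_{1}\tilde f_{0}=0$. The only minor difference is that you obtain $D_{1}\tilde f_{1}=1$ directly from the antisymmetry of $\hat f_{0}(a)-\hat f_{0}(a+1)$, whereas the paper deduces it from the general fact (its Proposition~\ref{prop16}) that for any bent function the derivatives $D_{1}f_{0}$ and $D_{1}f_{1}$ of the two components differ by $1$.
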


The converse of Theorem \ref{thm1} is not true. In other words, it is not true that if $f_{0}+f_{1}=tr$ then $D_{1}f_{0}$ is a constant function,
 as it will be seen in Example 2. However, there is a special case given by the next corollary which follows  immediately from Theorem \ref{thm2}.

\begin{corollary}\label{cor3}
With the above notation,  let $F$ be a bent  such that $f_{0}+f_{1}=tr$.
If $F$ is self dual, say $F=\tilde{F}$, then $D_{1}{f}_{0}=0$.
\end{corollary}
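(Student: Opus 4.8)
The plan is to obtain the corollary as an immediate specialization of Theorem \ref{thm2}, with the self-duality hypothesis supplying the one extra ingredient. First I would record that the components of a $(2t)$-boolean function are uniquely determined by the function itself: the defining formulas $f_0(u) = F(u,0)$ and $f_1(u) = F(u,1)$ show that the assignment $F \mapsto (f_0, f_1)$ is injective. Hence if $F = \tilde{F}$, then $F$ and its dual have the same components, that is, $\tilde{f}_0 = f_0$ and $\tilde{f}_1 = f_1$.

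Next I would simply invoke Theorem \ref{thm2}. Its hypothesis is precisely $f_0 + f_1 = tr$, which we are given, so the theorem delivers $D_1 \tilde{f}_0 = 0$ (and also $D_1 \tilde{f}_1 = 1$). Substituting the self-duality identity $\tilde{f}_0 = f_0$ into the first conclusion gives $D_1 f_0 = 0$, which is exactly the assertion of the corollary. As a consistency check one may note that the second conclusion, via $\tilde{f}_1 = f_1$, reads $D_1 f_1 = 1$, in agreement with $f_1 = f_0 + tr$ and $D_1 tr = tr(1) = 1$ in the odd-degree field $\F_{2^{2t-1}}$.

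There is no genuine obstacle here: the whole argument is a one-line deduction from Theorem \ref{thm2} together with the observation that self-duality of $F$ forces the component pairs of $F$ and $\tilde{F}$ to coincide. The only point worth stating explicitly is this uniqueness of the two-variable representation, and even that is transparent from the defining formulas $f_0(u) = F(u,0)$, $f_1(u) = F(u,1)$. This matches the text's assertion that the corollary follows immediately from Theorem \ref{thm2}.
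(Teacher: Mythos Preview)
Your argument is correct and matches the paper's approach: the corollary is stated there as an immediate consequence of Theorem~\ref{thm2}, and your observation that $F=\tilde{F}$ forces $\tilde{f}_0=f_0$ (from $f_0(u)=F(u,0)$) is exactly the missing link needed to specialize $D_1\tilde{f}_0=0$ to $D_1 f_0=0$.
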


Relations between the components of the dual $\tilde{F}$ of a bent function $F$ such that  $f_{0}+f_{1}=tr$ are now presented in the next result.

\begin{theorem}\label{thm4}
Let $F$ be a bent function and let $\tilde{F}$ be its dual bent function with
$$\phi_{F}(x,y)=(y+1)f_{0}(x)+yf_{1}(x) \quad{\rm and}\quad \phi_{\tilde{F}}(x,y)=(y+1)\tilde{f}_{0}(x)+y\tilde{f}_{1}(x).$$
Assume $f_{0}+f_{1}=tr$. Let $\hat{f}_{0}$ be the Fourier transform of $f_{0}$. Define
 $\mathcal{S}=\lbrace v\in \F_{2^{2t-1}}\mid \hat{f}_{0}(v)=-2^t\rbrace$ and $\mathcal{S}_{1}=\lbrace u+1\mid u\in \mathcal{S}\rbrace$.
Let $\mathcal{G}=\lbrace v\in \F_{2^{2t-1}}\mid \hat{f}_{0}(v)=0\rbrace$.
Let $g$ be the characteristic function of $\mathcal{G}$.
\begin{enumerate}
  \item[1)] The support of $\tilde{f}_{0}$ is $\mathcal{S}\cup \mathcal{S}_{1}$;
  \item[2)] $\tilde{f}_{1}(x) = \tilde{f}_{0}(x)+g(x)$.
\end{enumerate}
\end{theorem}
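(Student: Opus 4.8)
The plan is to reduce everything to a single identity relating the Fourier transform of $F$ to that of its first component $f_0$. I would choose the inner product on $\F_{2^{2t}}$ compatible with the decomposition $\F_{2^{2t-1}}\times\F_2$, namely $\langle(w,\epsilon),(u,\nu)\rangle=tr(wu)+\epsilon\nu$, and split the defining sum of $\hat F$ over the two cosets $\F_{2t}^{(0)}$ and $\F_{2t}^{(1)}$ to obtain $\hat F(w,\epsilon)=\hat f_0(w)+(-1)^{\epsilon}\hat f_1(w)$. The hypothesis $f_0+f_1=tr$ then lets me rewrite $\hat f_1$: since $tr(u)=\langle 1,u\rangle$ for the trace form, $\hat f_1(w)=\sum_u(-1)^{f_0(u)+\langle w+1,u\rangle}=\hat f_0(w+1)$. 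Combining these gives $\hat F(w,0)=\hat f_0(w)+\hat f_0(w+1)$ and $\hat F(w,1)=\hat f_0(w)-\hat f_0(w+1)$.

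Next I would exploit that $F$ is bent, so both of these quantities equal $\pm 2^t$, while $f_0$ is $(2t-1)$-near-bent (being a component of a bent function, as recalled in Section 1), so $\hat f_0$ takes only the values $-2^t,0,2^t$. Running through the finitely many sign possibilities for the pair $(\hat f_0(w),\hat f_0(w+1))$, the only ones compatible with $\hat f_0(w)\pm\hat f_0(w+1)\in\{-2^t,2^t\}$ are those in which exactly one of $\hat f_0(w),\hat f_0(w+1)$ vanishes and the other equals $\pm 2^t$. Since $w\mapsto w+1$ is a fixed-point-free involution, the blocks $\{w,w+1\}$ partition $\F_{2^{2t-1}}$ into pairs, and on each pair exactly one member lies in $\mathcal G$. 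In particular $\mathcal S\subseteq\F_{2^{2t-1}}\setminus\mathcal G$ and $\mathcal S_1\subseteq\mathcal G$, so $\mathcal S$ and $\mathcal S_1$ are disjoint.

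Finally I would read off the dual from $\hat F(w,\epsilon)=(-1)^{\tilde F(w,\epsilon)}2^t$, using $\tilde f_0(w)=\tilde F(w,0)$ and $\tilde f_1(w)=\tilde F(w,1)$. Thus $\tilde f_0(w)=1$ exactly when $\hat f_0(w)+\hat f_0(w+1)=-2^t$, which by the dichotomy happens precisely when the nonzero member of the pair equals $-2^t$, i.e. when $w\in\mathcal S$ (so $\hat f_0(w)=-2^t$) or $w\in\mathcal S_1$ (so $\hat f_0(w+1)=-2^t$); this yields part 1). For part 2) I would tabulate, across the four surviving configurations, the three values $\tilde f_0(w)$, $\tilde f_1(w)$ and $g(w)$ (which is $1$ precisely when $\hat f_0(w)=0$), and verify $\tilde f_1=\tilde f_0+g$ case by case: $g(w)=1$ exactly in the two subcases where $\hat f_0(w)=0$, and in each configuration the parities match.

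The only genuine subtlety, and the step I would treat most carefully, is the Fourier bookkeeping at the start: the translation identity $\hat f_1(w)=\hat f_0(w+1)$ depends on using the trace form on the $\F_{2^{2t-1}}$-factor so that adding $tr$ to $f_0$ becomes a frequency shift by $1$, and on keeping the decomposition of the dual variable $v=(w,\epsilon)$ consistent with that of $X=(u,\nu)$. Once that identity and the resulting dichotomy are in hand, both conclusions are forced by an elementary finite case analysis.
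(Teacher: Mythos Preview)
Your proposal is correct and follows essentially the same route as the paper's proof. The paper packages the Fourier decomposition you derive as its Lemma~\ref{lem13} (parts a), b), c)) and then, via Proposition~\ref{prop14}, splits $\F_{2^{2t-1}}$ into four sets $\mathcal{A}_1,\dots,\mathcal{A}_4$ according to the sign pattern of $(\hat f_0(a),\hat f_0(a+1))$---exactly your ``four surviving configurations''---and reads off the supports of $\tilde f_0$ and $\tilde f_1$ from there; your case-by-case verification of $\tilde f_1=\tilde f_0+g$ is the paper's symmetric-difference observation in different clothing.
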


The next theorem states properties of  a bent function in the case when the hypothesis of Theorems \ref{thm1} and \ref{thm2} are both satisfied.

\begin{theorem}\label{thm5}
Let $H$ be a bent function and let $\tilde{H}$ be its dual bent function with
 $$\phi_{H}(x,y)=(y+1)h_{0}(x)+yh_{1}(x) \quad{\rm and}\quad \phi_{\tilde{H}}(x,y)=(y+1)\tilde{h}_{0}(x)+y\tilde{h}_{1}(x).$$
Assume $h_{0}+h_{1}=tr$.
\begin{enumerate}
  \item[a)] If $D_{1}h_{0}=0$ then $\tilde{h}_{0}+\tilde{h}_{1}=tr$;
  \item[b)] If $D_{1}h_{0}=1$ then $\tilde{h}_{0}+\tilde{h}_{1}=tr+1$.
\end{enumerate}
\end{theorem}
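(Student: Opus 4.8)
The plan is to reduce both statements to a description of the zero set of $\hat h_0$, and then read that set off from the derivative hypothesis.

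First I would invoke Theorem \ref{thm4} applied to $H$ in place of $F$ (its hypotheses hold, since $H$ is bent and $h_0+h_1=tr$). Part 2) of that theorem gives $\tilde h_1 = \tilde h_0 + g$, where $g$ is the characteristic function of $\mathcal{G}=\{v\in\F_{2^{2t-1}} : \hat h_0(v)=0\}$. Hence
$$\tilde h_0 + \tilde h_1 = g,$$
so the whole problem becomes: identify the set $\mathcal{G}$ as a boolean function. Concretely, a) asks me to show $g=tr$ (that is, $\mathcal{G}=\{v : tr(v)=1\}$) and b) asks for $g=tr+1$ (that is, $\mathcal{G}=\{v : tr(v)=0\}$).

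Next comes the core computation. Fixing the trace inner product $\langle a,x\rangle = tr(ax)$ (so that $tr(v)=\langle v,1\rangle$ and therefore $\langle v,a+1\rangle=\langle v,a\rangle+tr(v)$), I would split the sum defining $\hat h_0(v)$ over the cosets $\{a,a+1\}$ of the line $\{0,1\}$ in $\F_{2^{2t-1}}$. Writing $h_0(a+1)=h_0(a)+D_1h_0(a)$, each coset contributes a bracketed factor, giving
$$\hat h_0(v)=\sum_{a}(-1)^{h_0(a)+\langle v,a\rangle}\bigl(1+(-1)^{D_1h_0(a)+tr(v)}\bigr),$$
the sum being over one representative $a$ per coset. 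Now the derivative hypothesis enters: if $D_1h_0\equiv\varepsilon$ is constant, the bracket equals $1+(-1)^{\varepsilon+tr(v)}$, which vanishes precisely when $tr(v)=\varepsilon+1$. Thus every $v$ with $tr(v)=\varepsilon+1$ lies in $\mathcal{G}$: in case a) ($\varepsilon=0$) this yields $\{tr=1\}\subseteq\mathcal{G}$, and in case b) ($\varepsilon=1$) it yields $\{tr=0\}\subseteq\mathcal{G}$.

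Finally I would upgrade these inclusions to equalities by counting. Since $H$ is bent, $h_0$ is $(2t-1)$-near-bent (by 3.1.4), so its nonzero Fourier coefficients equal $\pm 2^t$; Parseval then forces exactly $2^{2t-2}$ of them to be nonzero, whence $|\mathcal{G}|=2^{2t-1}-2^{2t-2}=2^{2t-2}$. As $tr$ is a balanced linear form, each of $\{tr=1\}$ and $\{tr=0\}$ also has size $2^{2t-2}$, so the inclusions above are in fact equalities. Reading off $g$: in case a) $\mathcal{G}=\{tr=1\}$ gives $g=tr$ and hence $\tilde h_0+\tilde h_1=tr$; in case b) $\mathcal{G}=\{tr=0\}$ gives $g=tr+1$ and hence $\tilde h_0+\tilde h_1=tr+1$. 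The only genuinely delicate points I anticipate are (i) committing to the trace inner product, so that the shift $a\mapsto a+1$ corresponds exactly to the extra sign $(-1)^{tr(v)}$ — this is what ties $\mathcal{G}$ to the function $tr$ rather than to some other linear form — and (ii) the counting step, which converts the one-sided inclusion from the vanishing bracket into the exact equality $\mathcal{G}=\{tr=\varepsilon+1\}$. The coset computation itself and the final identification of $g$ are routine.
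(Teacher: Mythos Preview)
Your proposal is correct and follows essentially the same route as the paper: apply Theorem~\ref{thm4} to reduce $\tilde h_0+\tilde h_1$ to the characteristic function of $\{v:\hat h_0(v)=0\}$, then identify that set from the constant-derivative hypothesis and a counting argument. The only cosmetic difference is that the paper packages your coset-splitting/Parseval step as Lemma~\ref{lem15} (proved via the weight-preserving permutation $x\mapsto x+1$ together with $(R_6)$) and simply cites it, whereas you reprove that lemma inline.
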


\subsection{Pseudo-duality}\label{sec2.2}
The previous results lead to the introduction of a new definition (notations are above).

\begin{definition}\label{def6}
 Let $G$ be a $(2t)$-bent function and let $\tilde{G}$ be its dual bent function with
$\phi_{\tilde{G}}(x,y)=(y+1)\tilde{g}_{0}(x)+y\tilde{g}_{1}(x)$.
The Pseudo-duals  of $G$ are the two $(2t)$-boolean function $\bar{G}_{0}$ and $\bar{G}_{1}$ defined by
\begin{align*}
&\phi_{\bar{G}_{0}}(x,y)=(y+1)\tilde{g}_{0}(x)+y(\tilde{g}_{0}(x)+tr(x));\\
&\phi_{\bar{G}_{1}}(x,y)=(y+1)\tilde{g}_{1}(x)+y(\tilde{g}_{1}(x)+tr(x)).
\end{align*}
\end{definition}

The meaning of this definition is given by the next theorem

\begin{theorem}\label{thm7}
Define the following two conditions on a $(2t)$-bent functions $\mathcal{G}$ with
$\phi_{G}(x,y)=(y+1)g_{0}(x)+yg_{1}(x)$.
\begin{enumerate}
  \item[$(\mathcal{T})$] $g_{0}+g_{1}=tr+\xi$ with $\xi\in \lbrace 0,1\rbrace$;
  \item[$(\mathcal{C})$] $D_{1}g_{0}=0$.
\end{enumerate}
If $F$ is a $(2t)$-bent function meeting condition $(\mathcal{T})$ then
\begin{enumerate}
  \item[A)] The pseudo-duals $\bar{F}_{0}$ and $\bar{F}_{1}$ are bent functions;
  \item[B)] The dual $\tilde{\bar{F}}_{0}$ of $\bar{F}_{0}$ meets $(\mathcal{C})$ and $(\mathcal{T})$ with  $\xi=0$;
  \item[C)] The dual $\tilde{\bar{F}}_{1}$ of $\bar{F}_{1}$ meets $(\mathcal{C})$ and $(\mathcal{T})$ with  $\xi=1$.
\end{enumerate}
\end{theorem}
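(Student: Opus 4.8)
The plan is to read everything off from the construction in Definition~\ref{def6} together with Theorems~\ref{thm1}, \ref{thm2} and \ref{thm5}, with the only genuine computation being the normalisation of the parameter $\xi$. Write $\phi_F(x,y)=(y+1)f_0(x)+yf_1(x)$ and let $\tilde f_0,\tilde f_1$ be the components of the dual $\tilde F$. The first thing to record is structural: by construction both pseudo-duals already satisfy $(\mathcal T)$ with $\xi=0$, since the components of $\bar F_0$ are $\tilde f_0$ and $\tilde f_0+tr$, whose sum is $tr$, and likewise for $\bar F_1$ built from $\tilde f_1$. I would also reduce the hypothesis on $F$ to the case $\xi=0$ at the outset: if $f_0+f_1=tr+1$, replace $F$ by $F+y$, which is again bent with components $f_0,f_1+1$ summing to $tr$; since adding the linear form $y=\langle(0,1),\cdot\rangle$ shifts the Fourier transform by $(0,1)$, the dual of $F+y$ is $\tilde F$ precomposed with $(a,b)\mapsto(a,b+1)$, so the effect on the dual components is merely to interchange $\tilde f_0$ and $\tilde f_1$.

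For part A I would invoke Theorem~\ref{thm1}. Since $F$ is bent, its dual $\tilde F$ is bent, and hence its components $\tilde f_0$ and $\tilde f_1$ are near-bent (see 3.1.4). Applying the pseudo-reciprocal Theorem~\ref{thm2} to $F$ with $f_0+f_1=tr$ gives $D_1\tilde f_0=0$ and $D_1\tilde f_1=1$; in particular both derivatives are constant. Now $\bar F_0$ has first component $\tilde f_0$, which is near-bent with constant derivative $D_1\tilde f_0$, and second component $\tilde f_0+tr$, so its two components sum to $tr$. This is exactly the hypothesis of Theorem~\ref{thm1}, which forces $\bar F_0$ to be bent; the identical argument with $\tilde f_1$ in place of $\tilde f_0$ gives that $\bar F_1$ is bent, proving A.

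For parts B and C I would feed each pseudo-dual back into Theorems~\ref{thm2} and \ref{thm5}. Both $\bar F_0$ and $\bar F_1$ are bent with component-sum $tr$, so Theorem~\ref{thm2} applies to each and shows that the first component of $\tilde{\bar F}_0$ (respectively $\tilde{\bar F}_1$) has vanishing $D_1$-derivative, which is precisely condition $(\mathcal C)$. To determine the value of $\xi$ I would then apply Theorem~\ref{thm5}: for $\bar F_0$ the first component is $h_0=\tilde f_0$ with $D_1h_0=D_1\tilde f_0=0$, so case a) yields component-sum $tr$, i.e. $\tilde{\bar F}_0$ meets $(\mathcal T)$ with $\xi=0$; for $\bar F_1$ the first component is $h_0=\tilde f_1$ with $D_1h_0=D_1\tilde f_1=1$, so case b) yields component-sum $tr+1$, i.e. $\tilde{\bar F}_1$ meets $(\mathcal T)$ with $\xi=1$. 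This establishes B and C.

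The step I expect to be the main obstacle is the bookkeeping of the two constants $D_1\tilde f_0$ and $D_1\tilde f_1$, because these are what decide, through the two cases of Theorem~\ref{thm5}, which pseudo-dual lands in the $\xi=0$ class and which in the $\xi=1$ class. These constants come from Theorem~\ref{thm2}, whose hypothesis demands the component-sum to be exactly $tr$; consequently the normalisation of the $\xi=1$ case by adding $y$ — the one place where the dual is transformed rather than merely quoted — must be carried out explicitly and its effect on $\tilde f_0,\tilde f_1$ tracked precisely, so that the assignment of $\xi=0$ to $\tilde{\bar F}_0$ and of $\xi=1$ to $\tilde{\bar F}_1$ is seen to be correct. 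Every other step is a direct citation of an already established theorem.
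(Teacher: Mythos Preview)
Your core argument is exactly the paper's: from $(\mathcal T)$ use Theorem~\ref{thm2} to obtain $D_1\tilde f_0=0$ and $D_1\tilde f_1=1$; then Theorem~\ref{thm1} gives A, a second application of Theorem~\ref{thm2} to $\bar F_0,\bar F_1$ gives $(\mathcal C)$, and Theorem~\ref{thm5} (case a for $\bar F_0$, case b for $\bar F_1$) fixes the value of $\xi$ in B and C. The paper's own proof simply begins ``Since $f_0+f_1=tr$'' and never treats $\xi=1$ separately, so for the case $\xi=0$ your proposal and the paper coincide.

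Your reduction of the $\xi=1$ case by passing to $F+y$ is more than the paper does, and your computation that this interchanges $\tilde f_0$ and $\tilde f_1$ is correct. However, you should carry the swap one step further: since $\bar F_0$ is built from $\tilde f_0$ and $\bar F_1$ from $\tilde f_1$, replacing $F$ by $F+y$ interchanges the two pseudo-duals as well, $\overline{(F+y)}_0=\bar F_1$ and $\overline{(F+y)}_1=\bar F_0$. Thus proving B and C for $F+y$ and translating back yields $\tilde{\bar F}_1$ with $\xi=0$ and $\tilde{\bar F}_0$ with $\xi=1$, the opposite of what you anticipate in your final paragraph. In other words, the bookkeeping you flag as the main obstacle does not come out the way you expect; the paper avoids the issue by tacitly working only with $f_0+f_1=tr$.
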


\subsection{New families of bent functions}\label{sec2.3}
Let $\mathcal{F}$ be a family of $(2t)$-boolean functions, define $\bar{\mathcal{F}}_{0}=\lbrace \bar{F}_{0}\mid F\in \mathcal{F}\rbrace$
and $\bar{\mathcal{F}}_{1}=\lbrace \bar{F}_{1}\mid F\in \mathcal{F}\rbrace$.

By applying the previous theorem, if $\mathcal{F}$ is  a family of bent functions meeting $(\mathcal{T})$, then
$\bar{\mathcal{F}}_{0}$ and $\bar{\mathcal{F}}_{1}$ are new families of bent functions.

This is the case in the next proposition by using a family introduced in \cite[Theorem 9]{r7}.

\subsubsection{The Kasami-Welch example}\label{sec2.3.1}
This definition comes from the description of the near-bent function $f_{0}$ introduced in this example which is a classical object in the theory of boolean functions.

\begin{proposition}\label{prop8}
Let $t,s,d$ be integers such that
\begin{itemize}
  \item $2t-1$ is not divisible by $3$, $3s\equiv \pm 1 \mod (2t-1)$;
  \item $s<t$,  $d=4^s-2^s+1$.
\end{itemize}
Let $F$ be  the $(2t-1)$-boolean function  with
$$
\phi_{F}(x,y)=(y+1)tr(x^d)+ytr(x^d+x).
$$
The assertions  A), B), C) of Theorem \ref{thm7} hold for $F$.
\end{proposition}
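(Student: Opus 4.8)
The plan is to recognize $F$ as an instance to which Theorem \ref{thm7} applies directly: that theorem asks only that $F$ be bent and that it satisfy condition $(\mathcal{T})$, and it then delivers A), B), C) with no further work. So I would reduce the whole statement to checking these two hypotheses.

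Condition $(\mathcal{T})$ is immediate. The components of $F$ are $f_{0}(x)=tr(x^{d})$ and $f_{1}(x)=tr(x^{d}+x)$, and additivity of the trace gives $f_{1}=f_{0}+tr$, whence
$$
f_{0}+f_{1}=tr,
$$
which is $(\mathcal{T})$ with $\xi=0$.

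The substantive hypothesis is that $F$ is bent, and here I would invoke \cite[Theorem 9]{r7}, the source of the family: under the stated arithmetic constraints on $t,s,d$ this $F$ is bent. One cannot instead route through Theorem \ref{thm1}, because $D_{1}f_{0}$ need \emph{not} be constant for the Kasami--Welch exponent; this very family is of the kind showing that the converse of Theorem \ref{thm1} fails (cf.\ the discussion preceding Corollary \ref{cor3}). The bentness comes instead from the spectral reading of the two-variable representation: one has $\hat F(v,0)=\hat f_{0}(v)+\hat f_{1}(v)$ and $\hat F(v,1)=\hat f_{0}(v)-\hat f_{1}(v)$, and since $f_{0},f_{1}$ are $(2t-1)$-near-bent each Walsh value lies in $\{0,\pm 2^{t}\}$; hence $F$ is bent precisely when the Walsh supports of $f_{0}$ and $f_{1}$ are complementary. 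Establishing this complementarity for the Kasami--Welch pair is the real content, and it is exactly what the conditions $2t-1\not\equiv 0\pmod 3$ and $3s\equiv\pm1\pmod{2t-1}$ with $s<t$ secure; I would import it from \cite{r7} rather than re-derive the full Walsh spectrum.

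With both hypotheses in hand---$F$ bent and $(\mathcal{T})$ holding with $\xi=0$---Theorem \ref{thm7} applies and yields A), B) and C) with no additional computation. I expect the only genuine obstacle to be the bentness of the family, which is why the argument leans on \cite[Theorem 9]{r7}; the verification of $(\mathcal{T})$ and the passage through Theorem \ref{thm7} are routine. I would also note, consistently with the non-constancy of $D_{1}f_{0}$, that nothing here requires condition $(\mathcal{C})$ of Theorem \ref{thm7} to hold for $F$ itself---$(\mathcal{C})$ enters only as an output property of the pseudo-duals.
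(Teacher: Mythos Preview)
Your proposal is correct and follows essentially the same approach as the paper: cite \cite[Theorem 9]{r7} for the bentness of $F$, observe that $f_{0}+f_{1}=tr$ so $(\mathcal{T})$ holds, and then invoke Theorem \ref{thm7}. The paper's own proof is just these two sentences, without the additional commentary you provide on why Theorem \ref{thm1} is not the route and on the spectral interpretation.
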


\begin{proof}
In \cite[Theorem 9]{r7} it is proved that  $F$ is a bent function. Since $F$ satisfies  $(\mathcal{T})$
then  Theorem \ref{thm7} applies and gives the expected result.
\end{proof}

\begin{remark}
It is proved in \cite{r4} that  in the Kasami-Welch case, if $g$ is the function introduced in Theorem \ref{thm4} then $g(x)=1+tr(x^{2^{s}+1})$.
\end{remark}

\begin{ex}
$t=4,\,s=2$.
\medskip\par\noindent
$F:\quad f_{0}(x)=tr(x^{13}),\,f_{1}(x)=f_{0}(x)+tr(x)$,
\medskip\par\noindent
$\tilde{F}:\quad \tilde{f}_{0}(x)=tr(x^7+x^{11}+x^{19}+x^{21}),\,\tilde{f}_{1}(x)=\tilde{f}_{0}(x)+tr(x^5+1)$.

New bent functions
\medskip\par\noindent
$\bar{F}_{0}:\quad \bar{f}\,_{0}^{(0)}(x)=\tilde{f}_{0}(x),\,\bar{f}\,_{1}^{(0)}(x)=\tilde{f}_{0}(x)+tr(x)$,
\medskip\par\noindent
$\bar{F}_{1}:\quad \bar{f}\,_{0}^{(1)}(x)=\tilde{f}_{1}(x),\,\bar{f}\,_{1}^{(1)}(x)=\tilde{f}_{1}(x)+tr(x)$,
\medskip\par\noindent
$\tilde{\bar{F}}_{0}:\quad \tilde{\bar{f}}\,_{0}^{(0)}(x)=tr(x+x^{3}+x^{7}+x^{11}+x^{19}+x^{21})$,
\\
\hspace*{1cm} $\tilde{\bar{f}}\,_{1}^{(0)}(x)=\tilde{\bar{f}}\,_{0}^{(0)}(x)+tr(x)$,
\medskip\par\noindent
$\tilde{\bar{F}}_{1}:\quad \tilde{\bar{f}}\,_{0}^{(1)}(x)=tr(1+x^5+x^7+x^9+x^{11}+x^{19}+x^{21})$,
\\
\hspace*{1cm} $\tilde{\bar{f}}\,_{1}^{(1)}(x)=\tilde{\bar{f}}\,_{0}^{(1)}(x)+tr(x+1)$.
\end{ex}

\begin{remark}
In the above example $D_{1}f_{0}$ is not a constant function.
\end{remark}

\subsubsection{The quadratic case}\label{sec2.3.2}
In the next proposition we study the case where $f_{0}$ is quadratic and  such that $f_{0}(x)=tr(\pi(x))$
where all the coefficients of  $\pi(x)$ are in $\F_{2}$.

\begin{proposition}\label{prop9}
Let $f_{0}$ be a $(2t-1)$-near-bent function such that
\begin{itemize}
  \item $f_{0}(x)=tr(\pi(x))$ with $\pi(x)\in \F_{2}\lbrack x \rbrack$;
  \item The degree of $f_{0}$  is $2$.
\end{itemize}
\begin{enumerate}
  \item[1)] Then the $(2t)$-boolean function $F$ such that\\
            $$
            \phi_{F}(x,y)=(y+1)f_{0}(x)+yf_{1}(x) \quad{\rm with}\quad  f_{1}(x)=f_{0}(x)+tr(x)
            $$
            is a bent function.
  \item[2)] The dual $\tilde{F}$ of $F$ meets $(\mathcal{T})$.
  \item[3)] The  assertions  A), B), C) of Theorem \ref{thm7} hold for $F$.
\end{enumerate}
\end{proposition}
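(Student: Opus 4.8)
The plan is to deduce all three assertions from the theorems already established; the only genuine work is verifying the hypothesis $D_{1}f_{0}=\mathrm{const}$ required to invoke Theorem \ref{thm1}. Granting that $D_{1}f_{0}$ is constant, part 1) is immediate: $f_{0}$ is near-bent by assumption, and $f_{1}=f_{0}+tr$ gives $f_{0}+f_{1}=tr$, so Theorem \ref{thm1} applies verbatim and $F$ is bent.

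The heart of the matter is therefore the computation of $D_{1}f_{0}$. Writing $f_{0}(x)=tr(\pi(x))$ with $\pi\in\F_{2}[x]$ and using that $tr$ is $\F_{2}$-linear,
$$D_{1}f_{0}(x)=tr\bigl(\pi(x+1)+\pi(x)\bigr).$$
Because $f_{0}$ has degree $2$, every exponent occurring in $\pi$ has binary weight at most $2$, i.e. is of the form $2^{i}+2^{j}$ (with $i\neq j$), $2^{i}$, or $0$. I would expand each monomial separately using $(x+1)^{2^{k}}=x^{2^{k}}+1$ in characteristic $2$: a weight-$2$ monomial $x^{2^{i}+2^{j}}$ contributes $x^{2^{i}}+x^{2^{j}}+1$ to $\pi(x+1)+\pi(x)$, a weight-$1$ monomial contributes $1$, and the constant contributes $0$. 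Applying $tr$ and using $tr(x^{2^{k}})=tr(x)$, the two linear terms of each weight-$2$ contribution cancel since $tr(x)+tr(x)=0$; hence every monomial contributes only a constant ($tr(1)=1$, as $2t-1$ is odd), and $D_{1}f_{0}$ is a constant function. It is exactly here that the hypothesis $\pi\in\F_{2}[x]$ is essential: with coefficients outside $\F_{2}$ the two linear traces arising from a weight-$2$ monomial need not cancel, and constancy would fail (already for $tr(\alpha x^{3})$ with $\alpha\notin\F_{2}$).

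With $D_{1}f_{0}$ constant the remaining parts are pure applications. For 2), $F$ is bent with $f_{0}+f_{1}=tr$ and $D_{1}f_{0}\in\lbrace 0,1\rbrace$, so Theorem \ref{thm5} gives $\tilde{f}_{0}+\tilde{f}_{1}=tr+\xi$ with $\xi=D_{1}f_{0}$; thus $\tilde{F}$ meets $(\mathcal{T})$. For 3), $F$ is bent and satisfies $f_{0}+f_{1}=tr$, which is condition $(\mathcal{T})$ with $\xi=0$, so Theorem \ref{thm7} applies directly and yields A), B), C). The only nontrivial step, and hence the main obstacle, is the monomial-by-monomial trace computation above; once the cancellation $tr(x)+tr(x)=0$ is observed the argument is routine and no further difficulty arises.
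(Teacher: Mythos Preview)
Your proof is correct and follows essentially the same route as the paper: both arguments reduce everything to showing $D_{1}f_{0}$ is constant via a monomial-by-monomial trace computation, then invoke Theorem~\ref{thm1}, Theorem~\ref{thm5}, and Theorem~\ref{thm7} for parts 1), 2), 3) respectively. The only cosmetic difference is that the paper first normalizes each weight-$2$ trace term to the shape $tr(x^{2^{j}+1})$ via Frobenius before computing the derivative, whereas you handle general exponents $2^{i}+2^{j}$ directly; the cancellation $tr(x^{2^{i}})+tr(x^{2^{j}})=0$ is the same in both cases.
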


\begin{proof}
1) First notice that $r$ and $s$ are in $\N$, then for a convenient exponent $l$,
$$
tr(x^{2^{r}}+x^{2^{s}})=tr\lbrack(x^{2^{r}}+x^{2^{s}})^{2^{l}}\rbrack=tr(x^{2^{j}+1})
$$
for some $j$. If follows that we can express $f$  as $f_{0}(x)=\xi+\sum_{J}tr(x^{2^{j}+1})$ where $J$ is a subset of $\N$ with $J\not=\lbrace 0\rbrace$ and $\xi\in \F_{2}$.
Now, it is easy to chek that $x^{2^{j}+1}+(x+1)^{2^{j}+1}=x^{2^{j}}+x+1$ and thus $tr(x^{2^{j}+1})+tr((x+1)^{2^{j}+1})=1$.
Therefore $f_{0}(x)+f_{0}(x+1)=\sum_{J}1$ and $f_{0}(x)+f_{0}(x+1)=0$ if  $|J|$ is even and $f_{0}(x)+f_{0}(x+1)=1$ if $|J|$ is odd.
In both cases $D_{1}f_{0}$ is a constant function then  Part 1) is  a consequence of Theorem \ref{thm1}.

2) Since $f_{0}+ f_{1}=tr$ and $D_{1}f_{0}$ is a constant function then Theorem \ref{thm5} applies.

3) This is a direct consequence of Theorem \ref{thm7}.
\end{proof}

\begin{remark}
 Observe that for any $(2t)$-boolean function $G$ whose components are $g_{0}$ and  $g_{1}$ the TVR of $G$ $\phi_{G}(x,y)$ can be rewritten as
$$
\phi_{G}(x,y)=y(g_{0}(x)+g_{1}(x))+g_{0}(x).
$$
If $G=F$ then $\phi_{F}(x,y)=y\,tr(x)+f_{0}(x)$.
Since the degree of $f_{0}$ is 2 we deduce that the degree of $F$ also is 2.
Similarly for $\tilde{F}$: $\phi_{\tilde{F}}(x,y)=y(\tilde{f}_{0}(x)+\tilde{f}_{1}(x))+\tilde{f}_{0}(x)$.
We know from \cite[Lemma 2.5]{r6} that if the degree of $F$ is 2 then  $\tilde{F}$  also has degree 2.
From $\phi_{\tilde{F}}(x,y)=y(\tilde{f}_{0}(x)+\tilde{f}_{1}(x))+\tilde{f}_{0}(x)$ we deduce that degree of  $\tilde{f}_{0}$  is 2.
It follows that for every bent function among $F,\,\tilde{F},\,\bar{F}_{0},\,\bar{F}_{1},\,\tilde{\bar{F}}_{0},\,\tilde{\bar{F}}_{1}$, the degree is 2 and then the components have degree 2.
Obviously, $\bar{F}_{0}=\tilde{F}$, $\tilde{\bar{F}}_{0}=F$.
It is shown in \cite{r7} that if $\deg f_{0}=2$ there exists $e$ in $\F_{2^{2t-1}}$ such that $f_{0}(x)+ f_{1}(x)=tr(ex)$. Another way to prove 1) is  to show that $e=1$ when $\pi(x)$ is binary.
\end{remark}

\begin{ex}
\medskip\par\noindent
$t=4$.
\medskip\par\noindent
$F:\quad f_{0}(x)=tr(x^{3}+x^{9}),\,f_{1}(x)=f_{0}(x)+tr(x)$.
\medskip\par\noindent
$\tilde{F}:\quad \tilde{f}_{0}(x)=tr(x^{9}+x),\,\tilde{f}_{1}(x)=\tilde{f}_{0}(x)+tr(x)=tr(x^{9})$.
\medskip\par\noindent
$\bar{F}_{0}=\tilde{F}$, $\tilde{\bar{F}}_{0}=F$.
\medskip\par\noindent
$\bar{F}_{1}:\quad \bar{f}\,_{0}^{(1)}(x)=\tilde{f}_{1}(x),\,\bar{f}\,_{1}^{(1)}(x)=\tilde{f}_{1}(x)+tr(x)=\tilde{f}_{0}(x)$.
\medskip\par\noindent
$\tilde{\bar{F}}_{1}:\quad \tilde{\bar{f}}\,_{0}^{(1)}(x)=tr(x+x^{3}+x^{9}),\,\tilde{\bar{f}}\,_{1}^{(1)}(x)=\tilde{\bar{f}}\,_{0}^{(1)}(x)+tr(x+1)$.
\medskip\par\noindent
\end{ex}

\subsection{Comments}\label{sec2.4}\

$\bullet$ Note that starting from a near-bent function $f$ whose derivative $D_{1}f$ is a constant function and applying Theorem \ref{thm1} and Theorem \ref{thm7}, we are in position to construct six bent functions.

$\bullet$  It is easy to check that, for every boolean function $f$,
$$
D_{1}(f+1)=D_{1}(f),\qquad D_{1}(f+tr)=D_{1}(f)+1,
$$
if $f(0)=1$ then $(f+1)(0)=0$.

Now assume that $f$ is a near-bent function. As it will be shown by $(R_{2})$ in Section \ref{sec3.1.2}, $f+1,\,f+tr$ and $f+tr+1$  are also  near-bent functions.
From the above-mentioned results
$D_{1}(f+1)=0$ or $D_{1}(f+tr)=0$.

If $f(0)=0$ then $(f+1)(0)=1$ and $(f+1+tr)(0)=0$.
We deduce that among $f,\,f+1,\,f+tr,\,f+tr+1$ there always exists a near-bent function $h$ such that $D_{1}(h)=0$ and $h(0)=0$.
Therefore, in order to apply Theorem \ref{thm1} it is sufficient to find  a near-bent function $f$ such that $D_{1}(f)=0$ and $f(0)=0$.

In this case, consider the following polynomial $p(X)=\sum_{i=0}^{2t-2}f(\alpha^{i})X^{i}$ where $\alpha$ is a primitive root of $\F_{2^{2t-1}}$.
As pointed out in \cite{r10}, this is the representation of a word of a special cyclic code of length $2t-1$ over $\F_{2}$ which depends on the degree of $f$.

$\bullet$ The map sending a bent function to its pseudo-dual is not injective. For example  the bent function
defined by  $f_{0}(x)=tr(x^7+x^{13}+x^{19}+x^{21})$  and  $f_{1}(x)=f_{0}(x)+tr(x)$ and the bent function such that
$f_{0}(x)=tr(x^3+x^{11})$ and  $f_{1}(x)=f_{0}(x)+tr(x)$  have different duals, but the same pseudo-dual.

$\bullet$  Starting from a bent function $F$ which fulfill $(\mathcal{C})$ or $(\mathcal{T})$ it is possible to find other bent functions with the same properties.
The  bent functions $F$, the dual $\tilde{F}$, the pseudo-duals $\bar{F}_{0}$ and $\bar{F}_{1}$ and the duals  of these pseudo-duals
could be either distinct or not. The examples in  Section 4 show different  situations.

$\bullet$  In a very interesting paper \cite{r7} by Leander and McGuire the authors  consider the two-variable representations of  boolean functions
 with other notations. In particular, they introduce a characterization of near-bent functions $f$ such that $f$ and $f+tr$ are the components of
 a bent function (Theorem 3, $e=1$). This could be used to obtain an alternative proof of Theorem \ref{thm1}.

\section{Proofs}\label{sec3}
\subsection{Preliminaries}\label{sec3.1}
\subsubsection{Notation}\label{sec3.1.1}
Let $F$ be an $m$-boolean function.
\begin{itemize}
  \item The weight of  $F$ is defined by $w(F)=\sharp\lbrace v\in \F_{2}^{m}\mid F(v)=1\rbrace$.
  \item The TVR of $F$ is defined as in Section 1 by $$\phi_{F}(x,y)=(y+1)f_{0}(x)+yf_{1}(x).$$
  \item The Fourier transform of $F$ is defined as in Section 2.
  \item $T_{v}$ denotes the linear form of $\F_{2}^{m}$ defined by $T_{v}(X)=<v,X>$.
  \item If $m=2t-1$ then $tr$ denotes the trace function of $\F_{{2}^{2t-1}}$ and the map $x\rightarrow tr(ax)$ is denoted by $t_{a}$.
\end{itemize}

\subsubsection{Elementary and known results}\label{sec3.1.2}
We begin by summarizing some of  the elementary or classical results on bent and near-bent functions (see \cite{r1,r2,r9}).
\begin{enumerate}
  \item[$(R_{1})$] $\hat{F}(v)=2^{m}-2w(F+T_{v})$.
  \item[$(R_{2})$] Let $F$ be a $m$-boolean function and let $L$ be an affine linear form of $\F_{2}^{m}$.
                   $F$ is a bent function if and only if $F+L$ is a bent function.
                   $F$ is a near-bent function if and only if $F+L$ is a near-bent function.
  \item[$(R_{3})$] $w(F)=w(f_{0})+w(f_{1})$.
  \item[$(R_{4})$] $F$ is bent if and only if $\forall\, V\in \F_{2^{2t}}\, D_{V}F$ is balanced. That is
                   $$\sharp\lbrace U \in \F_{2^{2t}} \mid D_{V}F(U)=1\rbrace = \sharp\lbrace U \in \F_{2^{2t}} \mid D_{V}F(U)=0\rbrace.$$
  \item[$(R_{5})$] Let $\tilde{F}$ be the dual of a $(2t)$-bent function $F$. Then $\tilde{F}(v)=1$ if and only if $\hat{F}(v)=-2^{t}$.
  \item[$(R_{6})$] If $m=2t-1$ and $f$ is a near-bent function then the distribution of Fourier coefficients is as follows
\begin{align*}
&\hat{f}(v)=2^{t}\quad{\rm number\ of\ } v: 2^{2t-3}+(-1)^{f(0)}2^{t-2}, \\
&\hat{f}(v)=0\quad{\rm number\ of\ } v: 2^{2t-2}, \\
&\hat{f}(v)=-2^{t}\quad{\rm number\ of\ } v: 2^{2t-3}-(-1)^{f(0)}2^{t-2}.
\end{align*}
\end{enumerate}

\begin{comt}
$(R_{1})$, $(R_{2})$ and $(R_{5})$ follow immediatly from the definitions.
 $(R_{3})$  is obtained with straightforward calculations and $(R_{4})$ is classical.
The distribution given in $(R_{6})$ is a special cases of Proposition 4 in \cite{r2}.
\end{comt}

\subsubsection{Representation of $(2t)$-linear forms}\label{sec3.1.3}
The purpose of this part is to express linear forms and the inner product $< , >$ used in the calculation of the Fourier coefficients
 in such  a way which is consistent with the decomposition of $\F_{2^{2t}}$ as $\F_{2^{2t-1}}\times \F_{2}$.

\begin{definition}\label{def10}
For every $(a,\eta)$ in  $\F_{2^{2t}}$ the  map  $L_{(a,\eta)}$ from $\F_{2^{2t}}$ into $\F_{2}$ is defined by
$$L_{(a,\eta)}{(x,\nu)}=tr(ax)+\eta\nu.$$
\end{definition}

\begin{remark}
$L_{(a,\eta)}{(x,\nu)}$ is nothing but $tr(ax)+t_{r}^{(1)}(\eta\nu)$ where $t_{r}^{(1)}$ is the trace of $\F_{2}$ since
$t_{r}^{(1)}(\mu)=\mu$ for every $\mu$ in $\F_{2}$.

It can be easily checked that
\begin{description}
  \item[$(\star)$] $L_{(a,\eta)}$ is a linear form of $\F_{2^{2t}}$.
  \item[$(\star,\star)$] The map $(a,\eta)\longrightarrow L_{(a,\eta)}$ from $\F_{2^{2t}}$ to
                         $\lbrace  L_{(a,\eta)}\mid (a,\eta)\in \F_{2^{2t}}\rbrace$ is injective.
  \item[$(\star,\star,\star)$] The map $\big((a,\eta),(x,\nu)\big)\longrightarrow L_{(a,\eta)}{(x,\nu)}$
                            is a non-degenerate symetric bilinear form of $\F_{2^{2t}}$.
\end{description}
We immediately deduce from $(\star)$ and $(\star,\star)$ that
$\lbrace  L_{(a,\eta)}\mid (a,\eta)\in \F_{2^{2t}}\rbrace $ is  the set of linear forms of $\F_{2^{2t}}$.
On the other hand, $(\star,\star,\star)$  leads to a choice of the inner product  $T_{v}$ as defined in section \ref{sec2}.
\end{remark}

\begin{definition}\label{def11}
The inner product $< , >$ such that $T_{v}(X)=<v,X>$ is now defined by $T_{(a,\eta)}=L_{(a,\eta)}$. In other words,
$$<(a,\eta),(x,\nu)>=tr(ax)+\eta\nu.$$
\end{definition}

We immediatly deduce

\begin{proposition}\label{prop12}
 (*) $\quad \phi_{T_{(a,\eta)}}(x,y)=(y+1)tr(ax)+y(tr(ax)+\eta)$.
\medskip\par\noindent
 Let  $F$ be a $(2t)$-boolean function such that $\phi_{F}(x,y)=(y+1)f_{0}(x)+yf_{1}(x)$. Then
\medskip\par\noindent
(**)\ \ $\phi_{F+T_{(a,\eta)}}(x,y)=(y+1)(f_{0}(x)+tr(ax))+y(f_{1}(x)+tr(ax)+\eta)$.
\end{proposition}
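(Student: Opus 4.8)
The plan is to derive both identities directly from the definitions, the only substantive input being the additivity of the two-variable representation already recorded at the start of Section~\ref{sec2.1}.

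First I would establish (*). By Definition~\ref{def11}, the function $T_{(a,\eta)}$ is given by $T_{(a,\eta)}(x,\nu)=tr(ax)+\eta\nu$. To read off its TVR I would compute its two components by restriction: restricting to $\nu=0$ (that is, to $\F_{2t}^{(0)}$) gives the first component $x\mapsto tr(ax)$, while restricting to $\nu=1$ (to $\F_{2t}^{(1)}$) gives the second component $x\mapsto tr(ax)+\eta$. Substituting these two functions into the defining formula $\phi_{F}(x,y)=(y+1)f_{0}(x)+yf_{1}(x)$ yields
$$\phi_{T_{(a,\eta)}}(x,y)=(y+1)tr(ax)+y(tr(ax)+\eta),$$
which is exactly (*).

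For (**) I would invoke the identity $\phi_{F_{1}+F_{2}}=\phi_{F_{1}}+\phi_{F_{2}}$ from Section~\ref{sec2.1}, applied with $F_{1}=F$ and $F_{2}=T_{(a,\eta)}$, giving $\phi_{F+T_{(a,\eta)}}=\phi_{F}+\phi_{T_{(a,\eta)}}$. Adding the hypothesised TVR of $F$ to the expression for $\phi_{T_{(a,\eta)}}$ just obtained, and then grouping the $(y+1)$-terms and the $y$-terms separately, produces
$$(y+1)\bigl(f_{0}(x)+tr(ax)\bigr)+y\bigl(f_{1}(x)+tr(ax)+\eta\bigr),$$
which is (**).

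There is essentially no obstacle: the statement is a formal consequence of the definition of $T_{(a,\eta)}$ together with the linearity of the TVR map. The only point deserving a word of care is the additivity $\phi_{F_{1}+F_{2}}=\phi_{F_{1}}+\phi_{F_{2}}$; if one prefers not to quote it, one can instead observe that the components of $F+T_{(a,\eta)}$ are the pointwise sums of the corresponding components of $F$ and of $T_{(a,\eta)}$, since restriction to each of the cosets $\F_{2t}^{(0)}$ and $\F_{2t}^{(1)}$ is additive over $\F_{2}$, and then substitute into the defining formula exactly as in the proof of (*).
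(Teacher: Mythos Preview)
Your proof is correct and matches the paper's approach: the paper simply states that the proposition is immediately deduced from Definitions~\ref{def10} and~\ref{def11}, and your argument spells out precisely that deduction by computing the components of $T_{(a,\eta)}$ and invoking the additivity $\phi_{F_{1}+F_{2}}=\phi_{F_{1}}+\phi_{F_{2}}$.
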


\subsubsection{Representation of bent functions}\label{sec3.1.4}

As before, in this subsection we consider a $(2t)$-boolean $F$ function such that $\phi_{F}(x,y)=(y+1)f_{0}(x)+yf_{1}(x)$.

\begin{lemma}\label{lem13}\ \\
\hspace*{0.5cm} a) $\hat{F}(u,0)=\hat{f}_{0}(u)+\hat{f}_{1}(u)$.
\medskip\par\noindent
\hspace*{0.5cm} b) $\hat{F}(u,1)=\hat{f}_{0}(u)-\hat{f}_{1}(u)$.
\medskip\par\noindent
\hspace*{0.5cm} c) If $f_{0}+f_{1}=tr$ then $\hat{f}_{1}(u)=\hat{f}_{0}(u+1)$
\end{lemma}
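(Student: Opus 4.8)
The plan is to compute the three Fourier coefficients directly from the definition by splitting the sum over $\F_{2^{2t}}$ according to the decomposition $\F_{2^{2t}}=\F_{2^{2t-1}}\times\F_{2}$, that is, by separating the cases $\nu=0$ and $\nu=1$. First I would write out
$$
\hat{F}(a,\eta)=\sum_{(x,\nu)\in\F_{2^{2t-1}}\times\F_{2}}(-1)^{F(x,\nu)+\langle(a,\eta),(x,\nu)\rangle}.
$$
Using Definition \ref{def11} the exponent is $F(x,\nu)+tr(ax)+\eta\nu$, and using the two-variable representation we have $F(x,0)=f_{0}(x)$ and $F(x,1)=f_{1}(x)$. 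Splitting the sum into its $\nu=0$ and $\nu=1$ parts gives
$$
\hat{F}(a,\eta)=\sum_{x}(-1)^{f_{0}(x)+tr(ax)}+(-1)^{\eta}\sum_{x}(-1)^{f_{1}(x)+tr(ax)}.
$$
Recognizing each inner sum as a Fourier coefficient of a $(2t-1)$-boolean function computed with the inner product $tr(ax)$, this is exactly $\hat{f}_{0}(a)+(-1)^{\eta}\hat{f}_{1}(a)$. Setting $\eta=0$ yields part a) and $\eta=1$ yields part b); so parts a) and b) are purely a matter of bookkeeping with the sign $(-1)^{\eta}$.

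For part c) I would assume $f_{0}+f_{1}=tr$, so $f_{1}(x)=f_{0}(x)+tr(x)$. Substituting into the definition of $\hat{f}_{1}$,
$$
\hat{f}_{1}(u)=\sum_{x}(-1)^{f_{1}(x)+tr(ux)}=\sum_{x}(-1)^{f_{0}(x)+tr(x)+tr(ux)}=\sum_{x}(-1)^{f_{0}(x)+tr((u+1)x)},
$$
where the last step uses the additivity of the trace, $tr(x)+tr(ux)=tr((u+1)x)$. The final sum is precisely $\hat{f}_{0}(u+1)$, which is the claim.

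None of these steps presents a genuine obstacle; the lemma is essentially a direct unwinding of the definitions. The only point that requires mild care is the consistency of the inner product: I must make sure that the inner product used for the $(2t)$-dimensional Fourier transform (Definition \ref{def11}, namely $tr(ax)+\eta\nu$) restricts correctly to the inner product $tr(ax)$ implicitly used when speaking of $\hat{f}_{0}$ and $\hat{f}_{1}$ as Fourier transforms of $(2t-1)$-boolean functions. Since Remark after Definition \ref{def10} establishes that $L_{(a,\eta)}$ is a genuine linear form and since the set of Fourier coefficients is independent of the chosen inner product, this matching is legitimate and requires no further justification.
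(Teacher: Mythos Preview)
Your proof is correct. The paper reaches the same conclusions by the equivalent route through weights: it uses $(R_{1})$, $(R_{3})$, and Proposition~\ref{prop12} to write $w(F+T_{(u,\eta)})$ as a sum of the weights $w(f_{i}+t_{u})$ (with a complementation when $\eta=1$) and then converts back to Fourier coefficients, whereas you stay with the character-sum definition and split the sum over $\nu\in\{0,1\}$ directly. The two arguments are the same computation in different clothing; your version is slightly more streamlined since it avoids the detour through weights, while the paper's version has the minor advantage of reusing the already-stated facts $(R_{1})$ and $(R_{3})$.
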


\begin{proof}
From (**) and $(R_{3})$,
\begin{itemize}
  \item If $\eta=0$: $w(F+T_{(u,0)})=w(f_{0}+t_{u})+w(f_{1}+t_{u})$. According to $(R_{1})$ this means
        $$2^{2t-1}-\frac{1}{2}\hat{F}(u,0)=2^{2t-2}-\frac{1}{2}\hat{f}_{0}(u)+2^{2t-2}-\frac{1}{2}\hat{f}_{1}(u),$$
        and this leads to a).
  \item If $\eta=1$: first notice that $w(f_{1}+t_{u}+1)=2^{2t-1}-w(f_{1}+t_{u})$. Hence
        $$w(F+T_{(u,0)})=w(f_{0}+t_{u})+2^{2t-1}-w(f_{1}+t_{u}).$$
        By using $(R_{1})$ as above, we obtain the result of b).
  \item $\hat{f}_{0}(u)=2^{2t-1}-2w(f_{0}+t_{u})$ and $\hat{f}_{1}(u)=2^{2t-1}-2w(f_{0}+t_{u+1})$ whence $\hat{f}_{1}(u)=\hat{f}_{0}(u+1)$.\qedhere
\end{itemize}
\end{proof}

The next proposition is a version of a classical result  which can be found in several papers on bent functions
(\cite{r2,r9}). We give now a proof for sake of convenience.

\begin{proposition}\label{prop14}
$F$ is a bent function if and only if\\
\hspace*{0.5cm}  (a)  $f_{0}$ and $f_{1}$ are near-bent.\\
\hspace*{0.5cm}  (b)  $\forall a\in \F_{2^{2t-1}}\mid \hat{f_{0}}(a)\mid +\mid \hat{f_{1}}(a)\mid =2^t$.
\end{proposition}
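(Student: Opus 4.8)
The plan is to reduce the statement to Lemma \ref{lem13}, parts a) and b), which already express the Fourier coefficients of $F$ in terms of those of its components. Writing $A=\hat{f}_{0}(u)$ and $B=\hat{f}_{1}(u)$, those identities read $\hat{F}(u,0)=A+B$ and $\hat{F}(u,1)=A-B$. Under the identification $\F_{2^{2t}}=\F_{2^{2t-1}}\times \F_{2}$ together with the inner product fixed in Definition \ref{def11}, the pairs $(u,0)$ and $(u,1)$ with $u\in \F_{2^{2t-1}}$ list every element of $\F_{2^{2t}}$ exactly once, so $F$ is bent precisely when $A+B\in \lbrace -2^{t},2^{t}\rbrace$ and $A-B\in\lbrace -2^{t},2^{t}\rbrace$ for every $u$. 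The whole argument then comes down to the elementary arithmetic of the integer pair $(A,B)$, and I expect no genuine obstacle; the only point requiring care is to keep in mind that all quantities here are integer-valued Fourier coefficients, so that dividing relations by $2$ is legitimate.

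For the forward direction I would assume $F$ bent and fix $u$. Then both $A+B$ and $A-B$ lie in $\lbrace -2^{t},2^{t}\rbrace$, so $2A=(A+B)+(A-B)$ and $2B=(A+B)-(A-B)$ each equal a sum or difference of two numbers in $\lbrace -2^{t},2^{t}\rbrace$, hence lie in $\lbrace -2^{t+1},0,2^{t+1}\rbrace$. Dividing by $2$ gives $A,B\in \lbrace -2^{t},0,2^{t}\rbrace$, which is exactly assertion (a). To derive (b) I would rule out the two degenerate possibilities: if $|A|=|B|=2^{t}$ then one of $A+B$, $A-B$ vanishes, contradicting bentness, and if $A=B=0$ then both $\hat{F}(u,0)$ and $\hat{F}(u,1)$ vanish, again a contradiction. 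Hence for each $u$ exactly one of $A,B$ equals $\pm 2^{t}$ while the other is $0$, so $|A|+|B|=2^{t}$, giving (b).

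For the converse I would assume (a) and (b). By (a) each of $|A|,|B|$ belongs to $\lbrace 0,2^{t}\rbrace$, and combined with $|A|+|B|=2^{t}$ from (b) this forces exactly one of $A,B$ to have absolute value $2^{t}$ and the other to be $0$. Feeding this into Lemma \ref{lem13} a) and b), both $\hat{F}(u,0)=A+B$ and $\hat{F}(u,1)=A-B$ reduce to the single nonzero term up to sign, hence lie in $\lbrace -2^{t},2^{t}\rbrace$. Since this holds for every $u$, all $2^{2t}$ Fourier coefficients of $F$ are $\pm 2^{t}$ and $F$ is bent, which completes the equivalence.
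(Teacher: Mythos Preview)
Your proof is correct and follows essentially the same route as the paper: both arguments invoke Lemma \ref{lem13} to write $\hat{F}(u,0)=A+B$ and $\hat{F}(u,1)=A-B$, then solve for $A,B$ and run the obvious case analysis on the pair $(A,B)$. Your write-up is in fact a bit more explicit than the paper's, which simply says ``by inspection of all possible cases'' at the corresponding steps.
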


\begin{remark}
(b) means that one of $\mid \hat{f_{0}}(a)\mid$ and $\mid \hat{f_{1}}(a)\mid$ is equal to $2^t$ and the other one is equal to 0.
\end{remark}

\begin{proof}
Let  $(a,\eta)$ be in $\F_{2^{2t}}$.

Assume $F$ is bent. From Lemma \ref{lem13}, $\hat{f_{0}}(a)=\frac{1}{2}\lbrack \hat{F}(a,0)+\hat{F}(a,1)\rbrack$ and
$\hat{f_{1}}(a)=\frac{1}{2}\lbrack \hat{F}(a,0)-\hat{F}(a,1)\rbrack$.
Since $F$ is bent, $\hat{f_{0}}(a)$ and $\hat{f_{0}}(a)$ are in $\lbrace -2^{t}, 2^{t}\rbrace$. By inspection of all possible case we see that
$\hat{f_{0}}(a)$ and $\hat{f_{1}}(a)$ are in $\lbrace -2^{t},0,-2^{t}\rbrace$ for every $a$, which means that $\hat{f_{0}}(a)$ and $\hat{f_{1}}(a)$ are near-bent. Furtheremore, we can check that in every case only one of $\hat{f_{0}}(a)$ and $\hat{f_{1}}(a)$ is $0$.

Conversely, now assume (a) and (b). By Lemma \ref{lem13}, this immediately  implies that for every $(a,\eta)$ in $\F_{2^{2t}}$ the weight of $\hat{F}(a,\eta)=\epsilon2^{t}$ with $\epsilon \in \lbrace -1,+1\rbrace$ and this proves that $F$ is bent.
\end{proof}

\subsection{A fundamental lemma}\label{sec3.2}
We need the following lemma which is  important  for the next  proofs and is a generalization of a classical result
on the Gold function (see \cite{r5}), since if $f$ is the Gold function then  $D_{1}f$ is a constant function.

\begin{lemma}\label{lem15}
Let $f$ be  a  $(2t-1)$-near-bent function.
\begin{itemize}
  \item If $D_{1}f=0$, then $\hat{f}(u)=0$ if and only if  $tr(u)=1$.
  \item If $D_{1}f=1$, then $\hat{f}(u)=0$ if and only if  $tr(u)=0$.
\end{itemize}
\end{lemma}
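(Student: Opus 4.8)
The plan is to exploit the interaction between the translation $x \mapsto x+1$ and the Fourier transform, and then to close the argument with a cardinality count supplied by $(R_{6})$. Throughout I use the trace inner product on $\F_{2^{2t-1}}$, so that $\hat{f}(u)=\sum_{x}(-1)^{f(x)+tr(ux)}$.

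First I would reindex this sum by the bijection $x \mapsto x+1$. Since $tr(u(x+1))=tr(ux)+tr(u)$, this gives $\hat{f}(u)=(-1)^{tr(u)}\sum_{x}(-1)^{f(x+1)+tr(ux)}$. Now I substitute the hypothesis on the derivative. When $D_{1}f=0$ we have $f(x+1)=f(x)$, so the remaining sum is again $\hat{f}(u)$, yielding the identity $\hat{f}(u)=(-1)^{tr(u)}\hat{f}(u)$. When $D_{1}f=1$ we have $f(x+1)=f(x)+1$, which contributes an extra sign and gives $\hat{f}(u)=-(-1)^{tr(u)}\hat{f}(u)$.

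From these identities one inclusion is immediate. In the case $D_{1}f=0$, if $tr(u)=1$ then $\hat{f}(u)=-\hat{f}(u)$, forcing $\hat{f}(u)=0$; symmetrically, in the case $D_{1}f=1$, $tr(u)=0$ forces $\hat{f}(u)=0$. This establishes $\{u : tr(u)=1\}\subseteq\{u : \hat{f}(u)=0\}$ in the first case, and the analogous inclusion with $tr(u)=0$ in the second.

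The remaining, \emph{harder}, direction is the converse, and here I would argue by counting rather than by a direct implication, since the sign identity yields no information when the relevant sign is $+1$. By $(R_{6})$ the number of $u$ with $\hat{f}(u)=0$ is exactly $2^{2t-2}$. On the other hand, since $tr$ is a nonzero linear form on $\F_{2^{2t-1}}$, each fiber $\{u : tr(u)=c\}$ has cardinality $2^{2t-2}$. Thus the inclusion just proved is between two sets of the same size $2^{2t-2}$ and must be an equality, giving $\hat{f}(u)=0\iff tr(u)=1$ in the first case and $\hat{f}(u)=0\iff tr(u)=0$ in the second. The main obstacle is recognizing that the sign identity alone is one-directional; the matching of cardinalities from $(R_{6})$ is exactly what upgrades the inclusion to the stated equivalence.
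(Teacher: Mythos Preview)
Your proof is correct and follows essentially the same route as the paper: apply the substitution $x\mapsto x+1$ to obtain one implication, then close with the cardinality match from $(R_{6})$. The only cosmetic difference is that the paper phrases the first step via weights (using $(R_{1})$) rather than directly via the exponential sum, which is an equivalent formulation.
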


\begin{proof}
Assume that $D_{1}f=\omega$ with $\omega\in\F_{2}$ which means that $f(x+1)=f(x)+\omega$.
The transform $\tau:\,x\rightarrow x+1$ is a permutation of $\F_{{ }_{2^{2t-1}}}$ and then preserves the weight of every
$(2t-1)$-boolean function. Thus
$$
\sharp\lbrace x\mid f(x)+tr(ux)=1\rbrace=\sharp\lbrace x\mid f(x+1)+tr(u(x+1))=1\rbrace,
$$
$$(E)\quad \sharp\lbrace x\mid f(x)+tr(ux)=1\rbrace=\sharp\lbrace x\mid f(x)+\omega+tr(ux)+tr(u)=1\rbrace.
$$
If $tr(u)+\omega=1$  the right hand member of $(E)$ is
$$
\sharp\lbrace x\mid f(x)+tr(ux)=0\rbrace=2^{2t-1}-\sharp\lbrace x\mid f(x)+tr(ux)=1\rbrace.
$$
Hence $(E)$ becomes
$$
\sharp\lbrace x\mid f(x)+tr(ux)=1\rbrace=2^{2t-1}-\sharp\lbrace x\mid f(x)+tr(ux)=1\rbrace.
$$
In other words $w(f+t_{u})=2^{2t-1}-w(f+t_{u})$ and thus
\par\noindent
\hspace*{1cm}  If $tr(u)+\omega=1$: $w(f+t_{u})=2^{2t-2}$ which is equivalent to $\hat{f}(u)=0$.
\par\noindent
For $\omega=0$ or $\omega=1$ the number of $u$ such that $tr(u)+\omega=1$ is $2^{2t-2}$ and  $(R_{6})$ claims that this  is also the number of
$u$ such that $\hat{f}(u)=0$. Then, immediately $\hat{f}(u)=0$ if and only if  $tr(u)+\omega=1$.
Finally if $\omega=0$ then  $\hat{f}(u)=0$ if and only if  $tr(u)=1$ and if $\omega=1$ then  $\hat{f}(u)=0$ if and only if  $tr(u)=0$.
\end{proof}

\subsection{Proof of Theorem \ref{thm1}}\label{sec3.3}
Let $f_{0}$ be a $(2t-1)$-near bent function. Let $F$ be the $(2t)$-boolean function whose components are $f_{0}$ and $f_{1}$ such that $f_{1}=f_{0}+tr$. Our task is to prove that if $D_{1}f_{0}$ is a constant function, then $F$ is a bent function.

First notice that $f_{1}$ also is a near-bent function. This means that $\hat{f}_{0}$ and $\hat{f}_{1}$ take their values in $\lbrace -2^{t},0, 2^{t}\rbrace$.
Since $f_{1}=f_{0}+tr$ and  $a\in \F_{2^{2t-1}}$, according to Lemma \ref{lem13}, c): $\hat{f}_{1}(a)=\hat{f}_{0}(a+1)$.

Because $2t-1$ is odd, observe that $tr(1)=1$. Therefore, one element of $\lbrace tr(a),tr(a+1)\rbrace$ is $0$ and the other one is $1$.

Lemma \ref{lem15} shows that if $D_{1}f_{0}$ is a constant function then $\hat{f}_{0}(a)$ and $\hat{f}_{0}(a+1)$ are not $0$ in the same time.
Hence, one element of $\lbrace \hat{f}_{0}(a),\hat{f}_{1}(a)\rbrace$ is $0$ and the other one is $2^{t}$ or $-2^{t}$.

Finally, according to Proposition \ref{prop14}, this is the proof that $F$ is bent.

\subsection{Proof of Theorem \ref{thm2}}\label{sec3.4}
First, we need the following proposition

\begin{proposition}\label{prop16}
Let $f_{0}$ and $f_{1}$ be the components of  a bent function $F$. Let $\omega\in\F_{2}$. Then we have\\
\centerline{$D_{1}f_{0}=\omega$ if and only if  $D_{1}f_{1}=\omega +1$.}
\end{proposition}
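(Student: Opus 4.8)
The plan is to recognize $D_{1}f_{0}$ and $D_{1}f_{1}$ as the two components of one and the same $(2t)$-boolean function, so that the bentness of $F$ can be brought to bear through the balancedness criterion $(R_{4})$. Concretely, I would identify the element $1\in\F_{2^{2t-1}}$ with $(1,0)\in\F_{2^{2t-1}}\times\F_{2}$ and consider the derivative $D_{(1,0)}F$. A direct substitution into the TVR gives
$$
\phi_{D_{(1,0)}F}(x,y)=\phi_{F}(x,y)+\phi_{F}(x+1,y)=(y+1)D_{1}f_{0}(x)+y\,D_{1}f_{1}(x),
$$
so that the components of $D_{(1,0)}F$ are exactly $D_{1}f_{0}$ and $D_{1}f_{1}$. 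This identification is the key step on which everything else rests.

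Next I would exploit bentness. Since $(1,0)\neq(0,0)$ and $F$ is bent, $(R_{4})$ guarantees that $D_{(1,0)}F$ is balanced, i.e.\ $w(D_{(1,0)}F)=2^{2t-1}$. Applying $(R_{3})$ to this derivative and using the previous identification yields the weight relation
$$
w(D_{1}f_{0})+w(D_{1}f_{1})=w(D_{(1,0)}F)=2^{2t-1}.
$$
This single identity carries all the content of the proposition, and it requires only that $F$ be bent; in particular it does not use the hypothesis $f_{0}+f_{1}=tr$, so it applies to the components of an arbitrary bent function, exactly as the statement demands.

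Finally I would translate constancy into weight. A $(2t-1)$-boolean function $h$ equals the constant $\omega\in\F_{2}$ precisely when $w(h)=\omega\,2^{2t-1}$, that is, weight $0$ for $\omega=0$ and weight $2^{2t-1}$ for $\omega=1$. Hence if $D_{1}f_{0}=\omega$ then $w(D_{1}f_{0})=\omega\,2^{2t-1}$, and the displayed identity forces $w(D_{1}f_{1})=2^{2t-1}-\omega\,2^{2t-1}$; for $\omega=0$ this weight is $2^{2t-1}$, giving $D_{1}f_{1}=1=\omega+1$, and for $\omega=1$ it is $0$, giving $D_{1}f_{1}=0=\omega+1$. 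The converse follows by the same computation with the roles of $f_{0}$ and $f_{1}$ interchanged. I do not anticipate a serious obstacle: the only points requiring care are that $(R_{4})$ applies because $(1,0)$ is nonzero, and that the weight identity be read with the $\F_{2}$-arithmetic $\omega+1$ rather than integer arithmetic.
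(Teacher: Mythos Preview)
Your argument is correct and follows the same route as the paper: identify $D_{1}f_{0}$ and $D_{1}f_{1}$ as the components of a derivative of $F$, invoke $(R_{4})$ for balancedness and $(R_{3})$ for the weight split, and conclude. The paper in fact writes $D_{(0,1)}F$ at this point, which is a slip (compare the earlier remark $D_{(0,1)}F=f_{0}+f_{1}$); your choice of $D_{(1,0)}F$ is the one that actually yields the displayed TVR and makes the proof work.
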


\begin{proof}
$D_{(0,1)}F(X)=F(X+(0,1))+F(X)$. The TVR of $D_{(0,1)}F$ is
$$
(y+1)(f_{0}(x+1)+f_{0}(x))+y(f_{1}(x+1)+f_{1}(x))=(y+1)D_{1}f_{0}(x)+yD_{1}f_{1}(x).
$$
From $(R_{3}):\, w(D_{(0,1)}F)=w(D_{1}f_{0})+w(D_{1}f_{1})$.
Furthermore, $(R_{4})$ shows that $w(D_{(0,1)}F)=2^{2t-1}$. Thus
$$
(\div)\quad 2^{2t-1}=w(D_{1}f_{0})+w(D_{1}f_{1}).
$$
On the other hand, if $f$ is any $(2t-1)$-boolean function, then $w(D_{1}f)=2^{2t-1}$ is equivalent to $D_{1}f=1$,
while $w(D_{1}f)=0$ is equivalent to $D_{1}f=0$. Therefore, $(\div)$ proves that if one of the two derivatives $D_{1}f_{0}$ and $D_{1}f_{1}$ is $0$, then the other one
 is $1$.
\end{proof}

Now we go back to the proof of Theorem \ref{thm2}.
$D_{1}\tilde{f}_{0}=0$ means  $\tilde{f}_{0}(u)=\tilde{f}_{0}(u+1)$ for all $u$ in $\F_{2^{2t-1}}$.
Since  $\tilde{f}_{0}$ is the restriction of $\tilde{F}$ to $\F_{2t}^{(0)}=\lbrace (u,0)\mid u \in \F_{{ }_{2^{2t-1}}}\rbrace$,
 then in order to prove that $D_{1}\tilde{f}_{0}=0$ it suffices to show that
$$
\forall \, u\in \F_{2^{2t-1}}:\tilde{F}(u,0)=\tilde{F}(u+1,0).
$$
Using Lemma \ref{lem13}, since $f_{0}+f_{1}=tr$, we have successively
\medskip\par\noindent
 \hspace*{0.5cm} $\hat{f}_{1}(u)=\hat{f}_{0}(u+1)$.
\medskip\par\noindent
 \hspace*{0.5cm} $\hat{F}(u,0)=\hat{f}_{0}(u)+\hat{f}_{1}(u)=\hat{f}_{0}(u)+\hat{f}_{0}(u+1)$.
\medskip\par\noindent
\hspace*{0.5cm} $\hat{F}(u+1,0)=\hat{f}_{0}(u+1)+\hat{f}_{1}(u+1)= \hat{f}_{0}(u+1)+\hat{f}_{0}(u+1)$.
\medskip\par\noindent
\hspace*{0.6cm}$\hat{F}(u,0)=\hat{F}(u+1,0)$
\medskip\par\noindent
Following $(R_{5})$ we deduce that $\tilde{F}(u,0)=1$ if and only if $\tilde{F}(u+1,0)=1$ for all $u\in \F_{2^{2t-1}}$ and then
$\tilde{F}(u,0)=\tilde{F}(u+1,0)$. $D_{1}\tilde{f}_{1}=1$ is a direct consequence of Proposition \ref{prop16}.

\subsection{Proof of Theorem \ref{thm4}}\label{sec3.5}
According to Lemma \ref{lem13}, for every $a$ in $\F_{2^{2t-1}}$,
$\hat{f}_{1}(a)=\hat{f}_{0}(a+1)$ whence $\hat{F}(a,0)=\hat{f}_{0}(a)+\hat{f}_{0}(a+1)$.
On the other hand, a remark after Proposition \ref{prop14} says that
one of $|\hat{f_{0}}(a)|$ and $|\hat{f_{1}}(a)|$ is equal to $2^t$ and the other one is equal to 0.
It follows that every $a$ in $\F_{2^{2t-1}}$ belongs to one of the following sets
\begin{align*}
&\mathcal{A}_{1}=\lbrace a\in \F_{2^{2t-1}}\mid\hat{f}_{0}(a)=-2^t\hbox{ and }\hat{f}_{0}(a+1)=0\rbrace,\\
&\mathcal{A}_{2}=\lbrace a\in \F_{2^{2t-1}}\mid\hat{f}_{0}(a)=0\hbox{ and }\hat{f}_{0}(a+1)=-2^t\rbrace,\\
&\mathcal{A}_{3}=\lbrace a\in \F_{2^{2t-1}}\mid\hat{f}_{0}(a)=2^t\hbox{ and }\hat{f}_{0}(a+1)=0\rbrace,\\
&\mathcal{A}_{4}=\lbrace a\in \F_{2^{2t-1}}\mid\hat{f}_{0}(a)=0\hbox{ and }\hat{f}_{0}(a+1)=2^t\rbrace.
\end{align*}
The definition of the dual of $F$ induces that $(a,\eta)$ is in the support of  $\tilde{F}$ if and only if $\hat{F}(a,\eta)=-2^t$.
If we notice
\begin{itemize}
  \item $\hat{F}(a,0)=-2^t$ if  $a\in \mathcal{A}_{1}$ or $a\in \mathcal{A}_{2}$;
  \item $\hat{F}(a,0)=2^t$ if  $a\in \mathcal{A}_{3}$ or $a\in \mathcal{A}_{4}$,
\end{itemize}
we deduce that, $(a,0)$ is in the support of  $\tilde{F}$ if and only if $a \in \mathcal{A}_{1}\cup \mathcal{A}_{2}$.
In other words  the support of $\tilde{f}_{0}$ is $\mathcal{S}_{0}=\mathcal{A}_{1}\cup \mathcal{A}_{2}$.

From the descriptions of $\mathcal{A}_{1}$ and $\mathcal{A}_{2}$, if $a\in \mathcal{A}_{1}$ then $a+1\in \mathcal{A}_{2}$ and if $b\in \mathcal{A}_{2}$ then $b=a+1$ with $a=b+1\in \mathcal{A}_{1}$.
Hence $\mathcal{A}_{2}=\lbrace u+1\mid u\in \mathcal{A}_{1}\rbrace$.
Finally, by inspection we see that  $\mathcal{A}_{1}$ is nothing but the set
$\mathcal{S}=\lbrace v\in \F_{2^{2t-1}}\mid \hat{f}_{0}(v)=-2^t \rbrace$
and this leads to  result  1).

Since $f_{0}+f_{1}=tr$ then by Lemma \ref{lem13}, $\hat{f}_{1}(u)=\hat{f}_{0}(u+1)$.
Hence, $(a,1)$ is in the support of  $\tilde{F}$ if and only if  $a$ is in $\mathcal{A}_{3}$ or in  $\mathcal{A}_{3}$.
Consequently, the support of $\tilde{f}_{1}$ is $\mathcal{T}_{1}=\mathcal{A}_{3}\cup \mathcal{A}_{4}$.
It can be easily seen that $\mathcal{T}_{1}$ is the symmetric difference of the support of $\tilde{f}_{0}$ and of
  $\mathcal{G}=\lbrace v\in \F_{2^{2t-1}}\mid \hat{f}_{0}(v)=0\rbrace$.
This immediatly gives result 2).

\subsection{Proof of Theorem \ref{thm5}}\label{sec3.6}
Since $h_{0}+h_{1}=tr$ then Theorem \ref{thm4} claims that  $\tilde{h}_{0}+\tilde{h}_{1}$ is the characteristic function of
 $\mathcal{H}=\lbrace v\in \F_{2^{2t-1}}\mid \hat{h}_{0}(v)=0\rbrace$. We know from Lemma \ref{lem15} that if  $D_{1}h_{0}=0$ then  $\hat{h}_{0}(u)=0$ is equivalent to
 $tr(u)=1$ and  if  $D_{1}h_{0}=1$ then  $\hat{h}_{0}(u)=0$  is equivalent to $tr(u)=0$. This the same as saying that the characteristic function of  $\mathcal{H}$ is
 $tr$ if $D_{1}h_{0}=0$ and is $tr+1$ if $D_{1}h_{0}=1$ and this is the expected result.

\subsection{Proof of Theorem \ref{thm7}}\label{sec3.7}
The components   of the considered boolean functions are
\begin{itemize}
  \item $f_{0}$,$f_{1}$ for $F$ and $\tilde{f}_{0}$,$\tilde{f}_{1}$ for the dual $\tilde{F}$ of  $F$.
  \item $\bar{f}\,_{0}^{(0)}$, $\bar{f}\,_{1}^{(0)}$ for the pseudo-dual $\bar{F}_{0}$ and $\bar{f}\,_{0}^{(1)}$, $\bar{f}\,_{1}^{(1)}$
        for the pseudo-dual $\bar{F}_{1}$.
  \item $\tilde{\bar{f}}\,_{0}^{(0)}$, $\tilde{\bar{f}}\,_{1}^{(0)}$ for the dual $\tilde{\bar{F}}_{0}$ of $\bar{F}_{0}$
        and $\tilde{\bar{f}}\,_{0}^{(1)}$, $\tilde{\bar{f}}\,_{1}^{(1)}$ for the dual $\tilde{\bar{F}}_{1}$ of $\bar{F}_{1}$.
\end{itemize}

Proof of A):
 Since $f_{0}+f_{1}=tr$ then $D_{1}\tilde{f}_{0}=0$ and $D_{1}\tilde{f}_{1}=1$  (Theorem \ref{thm2}).
We deduce from Theorem \ref{thm1} that $\bar{F}_{0}$ and  $\bar{F}_{1}$ are bent functions.

Proof of B) and C):
From the definitions of the duals,
$$
\bar{f}\,_{0}^{(0)}=\tilde{f}_{0},\,\bar{f}\,_{0}^{(1)}=\tilde{f}_{1}, \,\bar{f}\,_{0}^{(0)}+\bar{f}\,_{1}^{(0)}=tr,\,\bar{f}\,_{0}^{(1)}+\bar{f}\,_{1}^{(1)}=tr.
$$
Hence, according to Theorem \ref{thm2},
$D_{1}\tilde{\bar{f}}\,_{0}^{(0)}=0$ and $D_{1}\tilde{\bar{f}}\,_{0}^{(1)}=0$ and thus  $\tilde{\bar{F}}_{0}$ and $\tilde{\bar{F}}_{1}$ meet $(\mathcal{C})$.

Now applying Theorem \ref{thm5} to  $\bar{F}_{0}$ with $h_{0}=\bar{f}\,_{0}^{(0)}$ and $h_{1}=\bar{f}\,_{1}^{(0)}$, we get $h_{0}+h_{1}=tr$.
Since  $D_{1}\bar{f}\,_{0}^{(0)}=0$ then
$\tilde{h}_{0}+\tilde{h}_{1}=\tilde{\bar{f}}\,_{0}^{(0)}+\tilde{\bar{f}}\,_{1}^{(0)}=tr$. That is $\tilde{\bar{F}}_{0}$ meets $(\mathcal{T})$ with  $\xi=0$.

Similarly, for $\bar{F}_{1}$ with $h_{0}=\bar{f}\,_{0}^{(1)}$ and $h_{1}=\bar{f}\,_{1}^{(1)}$ we have $h_{0}+h_{1}=tr$.
Remark $D_{1}\bar{f}\,_{0}^{(1)}=D_{1}\tilde{f}_{1}$ whence, again from Theorem \ref{thm2}, $D_{1}\bar{f}\,_{0}^{(1)}=1$ and Theorem \ref{thm5}
gives $\tilde{\bar{f}}\,_{0}^{(0)}+\tilde{\bar{f}}\,_{1}^{(0)}=tr+1$ and thus $\tilde{\bar{F}}_{1}$ meets $(\mathcal{T})$ with  $\xi=1$.

\section{Examples}\label{sec4}
If $(\mathcal{C})$ or $(\mathcal{T})$ hold for a bent function $F$,
the  bent functions $F$, the dual $\tilde{F}$, the pseudo-duals  $\bar{F}_{0}$,  $\bar{F}_{1}$ and the duals  of these pseudo-duals
can be distinct or not. The examples below show different  situations.

For every following example, condition $(\mathcal{T})$ is satisfied for the initial bent function $F$.
$t=4$ and $tr$ is the trace function of $\F_{2^{7}}$.

\begin{exam}\label{ex1}
\medskip\par\noindent
$F:\ f_{0}(x)=tr(x^7+x^{13})$ not $(\mathcal{C})$, $(\mathcal{T})$\\
\hspace*{1cm} $f_{1}(x)=f_{0}(x)+tr(x)$
\medskip\par\noindent
$\tilde{F}:\quad \tilde{f}_{0}(x)=tr(x^5+x^7+x^9+x^{13}+x^{19}+x^{21})\quad $ \\
\hspace*{1cm} $\tilde{f}_{1}(x)=\tilde{f}_{0}(x)+tr(x+x^5+x^9)$
\medskip\par\noindent
$\bar{F}_{0}:\quad \bar{f}\,_{0}^{(0)}(x)=\tilde{f}_{0}(x),\,\bar{f}\,_{1}^{(0)}(x)=\tilde{f}_{0}(x)+tr(x)$
\medskip\par\noindent
$\bar{F}_{1}:\quad \bar{f}\,_{0}^{(1)}(x)=\tilde{f}_{1}(x),\,\bar{f}\,_{1}^{(1)}(x)=\tilde{f}_{1}(x)+tr(x)$
\medskip\par\noindent
$\tilde{\bar{F}}_{0}:\quad \tilde{\bar{f}}\,_{0}^{(0)}(x)=tr(x+x^7+x^9+x^{13}+x^{19}+x^{21}),\quad $
\\
\hspace*{1cm} $\tilde{\bar{f}}\,_{1}^{(0)}(x)=\tilde{\bar{f}}\,_{0}^{(0)}(x)+tr(x)$
\medskip\par\noindent
$\tilde{\bar{F}}_{1}:\quad \tilde{\bar{f}}\,_{0}^{(1)}(x)=tr(x+x^3+x^7+x^{13}+x^{19}+x^{21})$
\\
\hspace*{1cm} $\tilde{\bar{f}}\,_{1}^{(1)}(x)=\tilde{\bar{f}}\,_{0}^{(1)}(x)+tr(x+1)$
\end{exam}

\begin{exam}\label{ex2}
\medskip\par\noindent
$F:\ f_{0}(x)=tr(x^{15}+x^{27}+x^{29}+x^{43})$, not $(\mathcal{C})$, $(\mathcal{T})$\\
\hspace*{1cm} $f_{1}(x)=f_{0}(x)+tr(x)$
\medskip\par\noindent
$\tilde{F}:\quad \tilde{f}_{0}(x)=tr(x+x^3+x^5+x^9)\quad $ \\
\hspace*{1cm} $\tilde{f}_{1}(x)=\tilde{f}_{0}(x)+tr(x^5+x^7+x^{11}+x^{19}+x^{21})$
\medskip\par\noindent
$\bar{F}_{0}:\quad \bar{f}\,_{0}^{(0)}(x)=\tilde{f}_{0}(x),\,\bar{f}\,_{1}^{(0)}(x)=\tilde{f}_{0}(x)+tr(x)$
\medskip\par\noindent
$\bar{F}_{1}:\quad \bar{f}\,_{0}^{(1)}(x)=\tilde{f}_{1}(x),\,\bar{f}\,_{1}^{(1)}(x)=\tilde{f}_{1}(x)+tr(x)$
\medskip\par\noindent
$\tilde{\bar{F}}_{0}=\bar{F}_{0}$
\medskip\par\noindent
$\tilde{\bar{F}}_{1}:\quad \tilde{\bar{f}}\,_{0}^{(1)}(x)=tr(x+x^3+x^5+x^7+x^9+x^{11}+x^{19}+x^{21})$
\\
\hspace*{1cm} $\tilde{\bar{f}}\,_{1}^{(1)}(x)=\tilde{\bar{f}}_{0,1}(x)+tr(x+1)$
\end{exam}

\begin{exam}\label{ex3}
$F:\ f_{0}(x)=tr(x+x^3+x^7+x^{11}+x^{19}+x^{21})$, $(\mathcal{C})$, $(\mathcal{T})$\\
\hspace*{1cm} $f_{1}(x)=f_{0}(x)+tr(x)$
\medskip\par\noindent
$\tilde{F}:\quad \tilde{f}_{0}(x)=tr(x^7+x^{11}+x^{19}+x^{21})\quad$\\
\hspace*{1cm} $\tilde{f}_{1}(x)=\tilde{f}_{0}(x)+tr(x)$
\medskip\par\noindent
$\bar{F}_{0}=\tilde{F}$ and $\tilde{\bar{F}}_{0}=F\quad \bar{F}_{1}=\tilde{F}$ and $\tilde{\bar{F}}_{1}=F$:
\end{exam}

\begin{exam}\label{ex4}
$F:\ f_{0}(x)=tr(x^3+x^5+x^7+x^{11}+x^{19}+x^{21})$, $(\mathcal{C})$, $(\mathcal{T})$\\
\hspace*{1cm} $f_{1}(x)=f_{0}(x)+tr(x)$
\medskip\par\noindent
$\tilde{F}=F=\bar{F}_{0}=\bar{F}_{1}=\tilde{\bar{F}}_{0}=\tilde{\bar{F}}_{1}$
\end{exam}

\subsection{A special example and an open question}
In \cite{r7} the authors recall the definition of non-weakly-normal bent function and they introduce (Fact 13) an example
of such a function $F$ in dimension 12 defined by $\phi_{F}(x,y)=(y+1)tr(x^{241}+x)+ytr(x^{241})$
 where $tr$ is the trace of $\F_{2^{11}}$ over $\F_{2}$.
We see that $f_{1}+f_{0}=tr(x)$ and thus Theorem \ref{thm7} holds.

An interesting open question is: Are $\tilde{F}$, $\bar{F}_{0},\,\bar{F}_{1},\,\tilde{\bar{F}}_{0},\,\tilde{\bar{F}}_{1}$
also non-weakly normal?

\section{Conclusion}\label{sec5}
We have introduced  a way to construct bent functions starting from a near-bent functions which fulfill the special condition of Theorem \ref{thm1}. Applying this Theorem and Theorem \ref{thm7} we obtain six bent functions. An open question now is to describe explicitely the near-bent functions $f$ such that $D_{1}f$ is a constant function, for example by means of the trace function.

Another question is to express the characteristic function of the set $\mathcal{S}$ which appears in Theorem \ref{thm4}, by using the trace function.

The results of this work could probably be generalized  by replacing
$1$ with $e$ such that $tr(e)=1$ and $tr(x)$ with $tr(ex)$ in $(\mathcal{C})$ and $(\mathcal{T})$.

\medskip

Received December 2011; revised November 2013.

\medskip

{\it E-mail address:} wolfmann@univ-tln.fr


\begin{thebibliography}{99}

\bibitem{r1} (MR1830095) [10.1109/18.923730]
\newblock A. Canteault, C. Carlet, P. Charpin and C. Fontaine,
\newblock \emph{\emph{On cryptographic properties of the cosets of R(1,m)}},
\newblock \emph{IEEE Trans. Inform. Theory}, \textbf{47} (2001), 1494--1513.

\bibitem{r2} (MR2004705) [10.1109/TIT.2003.814476]
\newblock A. Canteault and P. Charpin,
\newblock \emph{\emph{Decomposing bent functions}},
\newblock \emph{IEEE Trans. Inform. Theory}, \textbf{49} (2003), 2004--2019.

\bibitem{r3} (MR2624542)
\newblock J. F. Dillon,
\newblock \emph{Elementary Hadamard Difference Sets},
\newblock Ph.D thesis, University of Maryland, 1974.

\bibitem{r4} (MR1714384) [10.1023/A:1026435428030]
\newblock J. F. Dillon,
\newblock \emph{\emph{Multiplicative difference sets via additive characters}},
\newblock \emph{Des. Codes Cryptogr.}, \textbf{17} (1999), 225--235.

\bibitem{r5} [10.1109/TIT.1968.1054106]
\newblock R. Gold,
\newblock \emph{\emph{Maximal recursive squences with 3-valued recursive cross-correlation functions}},
\newblock \emph{IEEE Trans. Inform. Theory}, \textbf{14} (1968), 154--156.

\bibitem{r6} (MR1637733) [10.1016/S0012-365X(98)00008-9]
\newblock X. D. Hou,
\newblock \emph{\emph{Cubic bent functions}},
\newblock \emph{Discrete Math.}, \textbf{189} (1998), 149--161.

\bibitem{r7} (MR2513644) [10.1016/j.jcta.2008.12.004]
\newblock G. Leander and G. McGuire,
\newblock \emph{\emph{Construction of bent functions from near-bent functions}},
\newblock \emph{J. Combin. Theory Ser. A}, \textbf{116} (2009), 960--970.

\bibitem{r8} (MR0403988) [10.1016/0097-3165(76)90024-8]
\newblock O. S. Rothaus,
\newblock \emph{\emph{On bent functions}},
\newblock \emph{J. Combin. Theory Ser. A}, \textbf{20} (1976), 300--305.

\bibitem{r9} (MR1735405)
\newblock J. Wolfmann,
\newblock \emph{\emph{Bent functions and coding theory}},
\newblock in \emph{Difference Sets, Sequences and their Correlation Properties} (eds. A. Pott, P. V. Kumar, T. Helleseth and D. Jungnickel), Kluwer Academic Publishers, 1999, 393--418.

\bibitem{r10} (MR2648560) [10.1090/conm/518/10218]
\newblock J. Wolfmann,
\newblock \emph{\emph{Cyclic code aspects of bent functions}},
\newblock in \emph{Finite Fields Theory and Applications}, Amer. Math. Soc., 2010, 363--384.



\end{thebibliography}
\end{document}